\newtheorem{open}{Open Problem}
\newcommand{\optproblemdef}[3]{
	\begin{center}
		\begin{boxedminipage}{.99\textwidth}
			\textsc{{#1}}\\[2pt]
			\begin{tabular}{ r p{0.8\textwidth}}
				\textit{~~~~Instance:} & {#2}\\
				\textit{Task:} & {#3}
			\end{tabular}
		\end{boxedminipage}
	\end{center}
}
\newcommand{\problemdef}[3]{
	\begin{center}
		\begin{boxedminipage}{.99\textwidth}
			\textsc{{#1}}\\[2pt]
			\begin{tabular}{ r p{0.8\textwidth}}
				\textit{~~~~Instance:} & {#2}\\
				\textit{Question:} & {#3}
			\end{tabular}
		\end{boxedminipage}
	\end{center}
}
\newcommand{\NP}{{\sf NP}}
\newcommand{\ssi}{\subseteq_i}
\title{Classifying Subset Feedback Vertex Set\\ for $H$-Free Graphs\thanks{A preliminary version of this paper will appear in the proceedings of WG 2022~\cite{PPR}.}}
\author{Giacomo Paesani\inst{1}\inst{1}\orcidID{0000-0002-2383-1339} 
\and
Dani\"el Paulusma\inst{2}\orcidID{0000-0001-5945-9287}
\and
Pawe{\l} Rz{\k a}\.{z}ewski\inst{3,4}\orcidID{0000-0001-7696-3848}\thanks{Supported by Polish National Science Centre grant no. 2018/31/D/ST6/00062.}}
\authorrunning{G. Paesani, D. Paulusma, P. Rz{\k a}\.{z}ewski}
\institute{
School of Computing, University of Leeds, UK, \\ \email{g.paesani@leeds.ac.uk}
\and Department of Computer Science, Durham University, UK, \\ \email{daniel.paulusma@durham.ac.uk}
\and Faculty of Mathematics and Information Science, Warsaw University of Technology, Poland \and Faculty of Mathematics, Informatics, and Mechanics, University of Warsaw, Poland
\email{pawel.rzazewski@pw.edu.pl}}
\begin{document}

\maketitle

\begin{abstract} 
In the {\sc Feedback Vertex Set} problem, we aim to find a small set $S$ of vertices in a graph intersecting every cycle.
The {\sc Subset Feedback Vertex Set} problem requires $S$ to intersect only those cycles that include a vertex of some specified set $T$.
We also consider the {\sc Weighted Subset Feedback Vertex Set} problem, where each vertex~$u$ has weight $w(u)>0$
and we ask that~$S$ has small weight.
By combining known \NP-hardness results with new polynomial-time results we prove full complexity dichotomies for {\sc Subset Feedback Vertex Set} and {\sc Weighted Subset Feedback Vertex Set} for $H$-free graphs, that is, graphs that do not contain a graph~$H$ as an induced subgraph.  

\keywords{feedback vertex set \and $H$-free graph \and complexity dichotomy}
\end{abstract}

\section{Introduction}\label{s-intro}

In a {\it graph transversal} problem the aim is to find a small set of vertices within a given graph that must intersect every subgraph that belongs to some specified family of graphs. 
Apart from the {\sc Vertex Cover} problem, the {\sc Feedback Vertex Set} problem is perhaps the best-known graph transversal problem. A vertex subset $S$ is a {\it feedback vertex set} of a graph~$G$ if $S$ intersects every cycle of $G$. In other words, the graph $G-S$ obtained by deleting all vertices of $S$ is a forest. We can now define the problem:
 
 \problemdef{{\sc Feedback Vertex Set}}
{a graph $G$ and an integer $k$.}
{does $G$ have a feedback vertex set $S$ with $|S|\leq k$?} 

\noindent
The {\sc Feedback Vertex Set} problem is well-known to be \NP-complete even under input restrictions. For example, by Poljak's construction~\cite{Po74}, the {\sc Feedback Vertex Set} problem is \NP-complete even for graphs of finite girth at least~$g$ (the girth of a graph is the length of its shortest cycle). To give another relevant example, {\sc Feedback Vertex Set} is also \NP-complete for line graphs~\cite{Mu17b}. 

In order to understand the computational hardness of {\sc Feedback Vertex Set} better, other graph classes have been considered as well, in particular those that are closed under vertex deletion. Such graph classes are called {\it hereditary}. It is readily seen that a graph class ${\cal G}$ is hereditary if and only if ${\cal G}$ can be characterized by a (possibly infinite) set ${\cal F}$ of forbidden induced subgraphs. From a systematic point of view it is natural to first consider the case where ${\cal F}$ has size~$1$, say ${\cal F}=\{H\}$ for some graph $H$. This leads to the notion of $H$-freeness: a graph $G$ is {\it $H$-free} for some graph $H$ if $G$ does not contain $H$ as an {\it induced} subgraph, that is, $G$ cannot be modified into $H$ by a sequence of vertex deletions.

As {\sc Feedback Vertex Set} is \NP-complete for graphs of finite girth at least~$g$ for every $g\geq 1$, it is \NP-complete for $H$-free graphs whenever $H$ has a cycle. As it is \NP-complete for line graphs and line graphs are claw-free, {\sc Feedback Vertex Set} is \NP-complete for $H$-free graphs whenever $H$ has an induced claw (the {\it claw} is the $4$-vertex star). 
In the remaining cases, the graph~$H$ is a {\it linear forest}, that is, the disjoint union of one or more paths. When $H$ is a linear forest, several positive results are known even for the weighted case. Namely, for a graph $G$, we can define a \emph{(positive) weighting} as a function 
$w: V\to \mathbb{Q}^+$.
For $v \in V$, $w(v)$ is the \emph{weight} of $v$, and for $S \subseteq V$, we define the weight $w(S)=\sum_{u\in S}w(u)$ of $S$ as the sum of the weights of the vertices in~$S$. This brings us to the following generalization of {\sc Feedback Vertex Set}:
 
\problemdef{{\sc Weighted Feedback Vertex Set}}
{a graph $G$, a positive vertex weighting~$w$ of $G$ and a rational number $k$.}
{does $G$ have a feedback vertex set $S$ with $w(S)\leq k$?} 

\noindent
Note that if $w$ is a constant weighting function,
then we obtain the {\sc Feedback Vertex Set} problem.
We denote the $r$-vertex path by $P_r$, and the {\it disjoint union} of two vertex-disjoint graphs $G_1$ and $G_2$ by $G_1+G_2=((V(G_1)\cup V(G_2),E(G_1)\cup E(G_2))$, where we write $sG$ for the disjoint union of $s$ copies of $G$.
It is known that {\sc Weighted Feedback Vertex Set} is polynomial-time solvable for $sP_3$-free graphs~\cite{PPR21} and $P_5$-free graphs~\cite{ACPRS20}. The latter result was recently extended to
$(sP_1+P_5)$-free graphs for every $s\geq 0$~\cite{PPR21}.
We write $H\ssi G$ to denote that $H$ is an {\it induced} subgraph of $G$. We can now summarize all known results~\cite{ACPRS20,Mu17b,PPR21,Po74} as follows.

\begin{theorem}\label{t-kn}
{\sc (Weighted) Feedback Vertex Set} for the class of $H$-free graphs is polynomial-time solvable if $H\ssi sP_3$ or $H\ssi sP_1+P_5$ for some $s\geq 1$, and is \NP-complete if $H$ is not a linear forest.
\end{theorem}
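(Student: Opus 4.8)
The plan is to treat this result as an assembly of the facts collected above, glued together by a single structural observation: for graphs $H'$ and $H$ with $H'\ssi H$, every $H'$-free graph is also $H$-free, so the class of $H'$-free graphs is contained in the class of $H$-free graphs. (Indeed, if a graph $G$ contained $H$ as an induced subgraph, then by transitivity of the induced-subgraph relation it would contain $H'$, contradicting $H'$-freeness.) Consequently, a polynomial-time algorithm on a class of $H$-free graphs transfers automatically to every smaller class of $H''$-free graphs with $H''\ssi H$, while an \NP-hardness proof on a subclass of $H$-free graphs immediately yields \NP-hardness for the whole class. I would state and verify this monotonicity observation first, paying attention to the direction of the containment.

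For the polynomial-time part, suppose $H\ssi sP_3$ for some $s\geq 1$. By the observation, every $H$-free graph is $sP_3$-free, so it suffices to invoke the known fact that {\sc Weighted Feedback Vertex Set} is polynomial-time solvable on $sP_3$-free graphs~\cite{PPR21}. The case $H\ssi sP_1+P_5$ is identical, now using the polynomial-time algorithm on $(sP_1+P_5)$-free graphs~\cite{PPR21}, which extends the $P_5$-free result of~\cite{ACPRS20}. Since a constant weighting recovers the unweighted problem, both conclusions hold for {\sc Feedback Vertex Set} as well.

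For the \NP-complete part, I would use that a graph $H$ fails to be a linear forest precisely when it contains a cycle or has a vertex of degree at least~$3$, and handle these two (possibly overlapping) cases separately. If $H$ contains a cycle, let $g$ be its girth; then every graph of girth at least $g+1$ is $H$-free, because an induced copy of $H$ would force a cycle of length $g$ in the host graph. Poljak's construction~\cite{Po74} makes {\sc Feedback Vertex Set} \NP-complete on graphs of girth at least $g+1$, a subclass of the $H$-free graphs, which gives the result. If instead $H$ is acyclic with a vertex $v$ of degree at least~$3$, then three neighbours of $v$ are pairwise non-adjacent (otherwise $H$ would contain a triangle), so the claw is an induced subgraph of $H$. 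As line graphs are claw-free, every line graph is $H$-free, and since {\sc Feedback Vertex Set} is \NP-complete on line graphs~\cite{Mu17b}, it is \NP-complete on $H$-free graphs. Membership in \NP{} is clear.

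The argument presents no deep obstacle, since all the genuine work is carried out by the cited algorithms and hardness constructions. The part demanding the most care is bookkeeping the direction of the induced-subgraph order: one must check that the \emph{hard} instance classes (graphs of large girth, line graphs) sit \emph{inside} the class of $H$-free graphs, and that the \emph{tractable} target classes ($sP_3$-free and $(sP_1+P_5)$-free graphs) \emph{contain} the class of $H$-free graphs. Getting either containment backwards would invalidate the reduction, so this is where I would be most careful.
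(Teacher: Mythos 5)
Your proposal is correct and takes essentially the same approach as the paper: Theorem~\ref{t-kn} is stated there as a summary of known results, and the paper's implicit justification (in the paragraphs preceding the theorem) is exactly your argument --- Poljak's large-girth construction when $H$ has a cycle, line graphs (being claw-free) when $H$ is acyclic with a vertex of degree at least~$3$, and the cited polynomial-time algorithms for $sP_3$-free and $(sP_1+P_5)$-free graphs otherwise, glued together by the containment of $H$-free classes under the induced-subgraph order. Your directional bookkeeping of these containments is also correct.
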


\noindent
Note that the open cases of Theorem~\ref{t-kn} are when $H$ is a linear forest with $P_2+P_4\ssi H$ or $P_6\ssi H$.

The {\sc (Weighted) Feedback Vertex Set} problem can be further generalized in the following way. Let $T$ be some specified subset of vertices of a graph $G$. A {\it $T$-cycle} of $G$ is a cycle that intersects~$T$.  A set $S_T\subseteq V$~is a {\it $T$-feedback vertex set} of $G$ if $S_T$ contains at least one vertex of every $T$-cycle; see also Fig.~\ref{f-example}.
We now consider the following generalizations of {\sc Feedback Vertex Set}:

\problemdef{{\sc Subset Feedback Vertex Set}}
{a graph $G$, a subset $T\subseteq V(G)$ and an integer $k$.}
{does $G$ have a $T$-feedback vertex set $S_T$ with $|S_T|\leq k$?} 

\problemdef{{\sc Weighted Subset Feedback Vertex Set}}
{a graph $G$, a subset $T\subseteq V(G)$, a positive vertex weighting~$w$ of $G$ and a rational number $k$.}
{does $G$ have a $T$-feedback vertex set $S_T$ with $w(S_T)\leq k$?} 

\begin{figure}[b]
\begin{center}
\begin{minipage}{0.45\textwidth}
\centering
\begin{tikzpicture}[xscale=0.55, yscale=0.55]
\draw (0,2)--(1.16,-1.6)--(-1.9,0.6)--(1.9,0.6)--(-1.16,-1.6)--(0,2) (-2.85,0.9)--(-1.74,-2.4)--(1.74,-2.4)--(2.85,0.9)--(0,3)--(-2.85,0.9) (-1.16,-1.6)--(1.16,-1.6)
(-1.9,0.6)--(-2.85,0.9) (-1.16,-1.6)--(-1.74,-2.4) (1.16,-1.6)--(1.74,-2.4) (1.9,0.6)--(2.85,0.9) (0,2)--(0,3);
\draw[fill=white](-1.9,0.6) circle [radius=5pt] (-1.16,-1.6) circle [radius=5pt] (1.16,-1.6) circle [radius=5pt] (1.9,0.6) circle [radius=5pt] 
(0,2) node[regular polygon,regular polygon sides=4,draw,fill=black,scale=0.7pt] {} (-2.85,0.9) node[regular polygon,regular polygon sides=4,draw,fill=white,scale=0.7pt] {} 
(-1.74,-2.4) node[regular polygon,regular polygon sides=4,draw,fill=black,scale=0.7pt] {} (1.74,-2.4) node[regular polygon,regular polygon sides=4,draw,fill=white,scale=0.7pt] {} (0,3) circle [radius=5pt];
\draw[fill=black](2.85,0.9) circle [radius=5pt];
\end{tikzpicture}
\end{minipage}
\qquad
\begin{minipage}{0.45\textwidth}
\centering
\begin{tikzpicture}[xscale=0.55, yscale=0.55]
\draw (0,2)--(1.16,-1.6)--(-1.9,0.6)--(1.9,0.6)--(-1.16,-1.6)--(0,2) (-2.85,0.9)--(-1.74,-2.4)--(1.74,-2.4)--(2.85,0.9)--(0,3)--(-2.85,0.9) (-1.16,-1.6)--(1.16,-1.6)
(-1.9,0.6)--(-2.85,0.9) (-1.16,-1.6)--(-1.74,-2.4) (1.16,-1.6)--(1.74,-2.4) (1.9,0.6)--(2.85,0.9) (0,2)--(0,3);
\draw[fill=white] (-1.9,0.6) circle [radius=5pt] (-1.16,-1.6) circle [radius=5pt] (1.16,-1.6) circle [radius=5pt] (1.9,0.6) circle [radius=5pt]
(2.85,0.9) circle [radius=5pt] (0,2) node[regular polygon,regular polygon sides=4,draw,fill=black,scale=0.7pt] {}
(-2.85,0.9) node[regular polygon,regular polygon sides=4,draw,fill=black,scale=0.7pt] {} (-1.74,-2.4) node[regular polygon,regular polygon sides=4,draw,fill=white,scale=0.7pt] {}
(1.74,-2.4) node[regular polygon,regular polygon sides=4,draw,fill=black,scale=0.7pt] {}(0,3) circle [radius=5pt];
\end{tikzpicture}
\end{minipage}
\caption{Two examples of a slightly modified Petersen graph with the set $T$ indicated by square vertices. In both examples, the set $S_T$ of black vertices is a $T$-feedback vertex set. On the left, $S_T\setminus T\neq \emptyset$.
On the right, $S_T\subseteq T$.}\label{f-example}.
\end{center}
\end{figure}
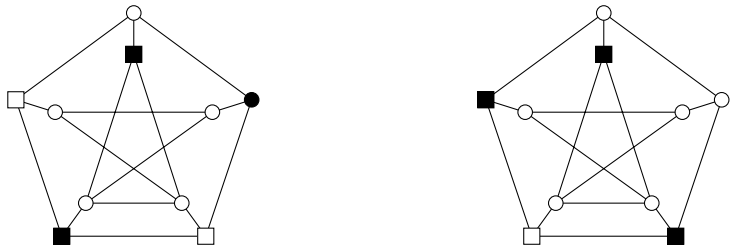

\noindent
The \NP-complete cases in Theorem~\ref{t-kn} carry over to {\sc (Weighted) Subset Feedback Vertex Set}; just set $T:=V(G)$ in both cases. However, this is no longer true for the polynomial-time cases: Fomin et al.~\cite{FHKPV14} proved \NP-completeness of {\sc Subset Feedback Vertex Set} for split graphs, which form a subclass of $2P_2$-free graphs. Interestingly,
Papadopoulos and Tzimas~\cite{PT20} proved that {\sc Weighted Subset Feedback Vertex Set} is \NP-complete for $5P_1$-free graphs, whereas Brettell et al.~\cite{BJPP20}
proved that {\sc Subset Feedback Vertex Set} can be solved in polynomial time even for $(sP_1+P_3)$-free graphs for every $s\geq 1$~\cite{BJPP20}. Hence,
in contrast to many other transversal problems, the complexities on the weighted and unweighted subset versions do not coincide for $H$-free graphs. 
 
It is also known that {\sc Weighted Subset Feedback Vertex Set} can be solved in polynomial time for permutation graphs~\cite{PT19} and thus for its subclass of $P_4$-free graphs. The latter result also follows from a more general result related to the graph parameter mim-width~\cite{Va12}. Namely, Bergougnoux, Papadopoulos and Telle~\cite{BPT19} proved that {\sc Weighted Subset Feedback Vertex Set} is polynomial-time solvable for graphs for which we can find a decomposition of constant mim-width in polynomial time~\cite{BPT19}; the class of $P_4$-free graphs is an example of such a class. 
Brettell et al.~\cite{BJP21} extended these results by proving that {\sc Weighted Subset Feedback Vertex Set}, restricted to $H$-free graphs, is polynomial-time solvable if 
$H\ssi 3P_1+P_2$ or $H\ssi P_1+P_3$. 

The above results leave open a number of unresolved cases for both problems, as identified in~\cite{BJPP20} and~\cite{BJP21}, where the following open problems are posed:

\begin{figure}
\begin{center}
\begin{tikzpicture}[scale=0.8] \draw (1.5,0)--(-1.5,0); \draw[fill=black] (1.5,0) circle [radius=3pt] (0.5,0) circle [radius=3pt] (-0.5,0) circle [radius=3pt] (-1.5,0) circle [radius=3pt] (0.8,1) circle [radius=3pt] (-0.8,1) circle [radius=3pt];
\end{tikzpicture}
\end{center}
\caption{The graph $2P_1+P_4$.}\label{f-2p1p4}
\end{figure}
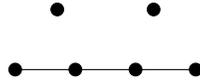

\begin{open}\label{o-1}
Determine the complexity of {\sc Weighted Subset Feedback Vertex Set} for $H$-free graphs if $H\in \{2P_1+P_3,P_1+P_4,2P_1+P_4\}$.
\end{open}

\begin{open}\label{o-2}
Determine the complexity of {\sc Subset Feedback Vertex Set} for $H$-free graphs if $H=sP_1+P_4$ for some integer $s\geq 1$.
\end{open}

\subsection{Our Results}\label{s-ours}

We completely solve Open Problems~\ref{o-1} and~\ref{o-2}.

In Section~\ref{s-2p1p4}, we prove that {\sc Weighted Subset Feedback Vertex Set} is polynomial-time solvable for $(2P_1+P_4)$-free graphs. This result generalizes all known polynomial-time results for {\sc Weighted Feedback Vertex Set}. It also immediately implies polynomial-time solvability for the other two cases in Open Problem~\ref{o-1}, as $(2P_1+P_3)$-free graphs and $(P_1+P_4)$-free graphs form subclasses of $(2P_1+P_4)$-free graphs. Combining the aforementioned \NP-completeness results of~\cite{FHKPV14} and~\cite{PT20} for $2P_2$-free graphs and $5P_1$-free graphs, respectively, with the \NP-completeness results in Theorem~\ref{t-kn} for the case where $H$ has a cycle or a claw and this new result gives us the following complexity dichotomy (see also Fig.~\ref{f-2p1p4}).

\begin{theorem}\label{t-dicho}
For a graph~$H$, {\sc Weighted Subset Feedback Vertex Set} on $H$-free graphs is polynomial-time solvable if 
$H\ssi 2P_1+P_4$, and is \NP-complete otherwise.
\end{theorem}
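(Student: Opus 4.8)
The plan is to assemble the dichotomy from one new algorithmic result together with known hardness results, glued by a short structural observation about linear forests. Throughout I rely on the standard monotonicity principle: if $H\ssi H'$ then every $H$-free graph is $H'$-free, so the class of $H$-free graphs is contained in the class of $H'$-free graphs. Consequently tractability transfers \emph{downward} (an algorithm for $H'$-free graphs also solves $H$-free graphs) while hardness transfers \emph{upward} (a hard subclass of $H$-free graphs makes the problem hard on all $H$-free graphs).

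For the polynomial-time side, suppose $H\ssi 2P_1+P_4$. Then every $H$-free graph is $(2P_1+P_4)$-free, so it suffices to invoke the algorithm developed in Section~\ref{s-2p1p4}, which is the genuine content of the paper. This single result already settles all three graphs of Open Problem~\ref{o-1}, since $2P_1+P_3\ssi 2P_1+P_4$ and $P_1+P_4\ssi 2P_1+P_4$.

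For the \NP-complete side, suppose $H\not\ssi 2P_1+P_4$, and split into three cases according to the structure of $H$. If $H$ contains a cycle, then Theorem~\ref{t-kn} gives \NP-completeness of (unweighted) {\sc Feedback Vertex Set} on $H$-free graphs, and setting $T:=V(G)$ turns such an instance into one of {\sc (Weighted) Subset Feedback Vertex Set}; the same argument applies if $H$ contains an induced claw. In the remaining case $H$ is a linear forest, and the crux is the elementary claim that \emph{every linear forest $H$ with $H\not\ssi 2P_1+P_4$ satisfies $2P_2\ssi H$ or $5P_1\ssi H$}. To see this, assume $2P_2\not\ssi H$ and $5P_1\not\ssi H$: the first condition forces $H$ to have at most one component containing an edge and that component to be no longer than $P_4$ (as $2P_2\ssi P_5$), so $H=aP_1+P_b$ with $0\le b\le 4$; the second bounds $\alpha(H)=a+\lceil b/2\rceil\le 4$; and a short check of the finitely many resulting forests shows each embeds into $2P_1+P_4$, a contradiction. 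Hence, if $2P_2\ssi H$ we invoke \NP-completeness of {\sc Subset Feedback Vertex Set} on $2P_2$-free graphs~\cite{FHKPV14} (via split graphs), and if $5P_1\ssi H$ we invoke \NP-completeness of {\sc Weighted Subset Feedback Vertex Set} on $5P_1$-free graphs~\cite{PT20}; in either case the hard class is a subclass of $H$-free graphs. Membership in \NP\ is routine, since a candidate $S_T$ is verified in polynomial time by checking that $G-S_T$ has no cycle through $T$.

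The main obstacle is not this assembly but the polynomial-time algorithm for $(2P_1+P_4)$-free graphs deferred to Section~\ref{s-2p1p4}; once it is in hand, the dichotomy reduces to the case analysis above, whose only non-bookkeeping ingredient is the structural claim on linear forests.
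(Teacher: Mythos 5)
Your proposal is correct and follows essentially the same route as the paper: the polynomial side comes from the new algorithm for $(2P_1+P_4)$-free graphs (Theorem~\ref{t-mm}), and the hardness side combines Theorem~\ref{t-kn} (for $H$ containing a cycle or a claw, via $T:=V(G)$) with the \NP-completeness results of~\cite{FHKPV14} for $2P_2$-free graphs and~\cite{PT20} for $5P_1$-free graphs. The only difference is that you spell out the structural claim (every linear forest $H$ with $H\not\ssi 2P_1+P_4$ contains an induced $2P_2$ or $5P_1$) that the paper leaves implicit, and your verification of that claim is accurate.
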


\noindent
In Section~\ref{s-2p1p4u} we solve Open Problem~\ref{o-2} by proving that {\sc Subset Feedback Vertex Set} is polynomial-time solvable for $(sP_1+P_4)$-free graphs, for every $s\geq 1$.
This result generalizes all known polynomial-time results for {\sc Weighted Feedback Vertex Set}.
After combining it with the aforementioned \NP-completeness results of~\cite{FHKPV14} and Theorem~\ref{t-kn} we obtain the following complexity dichotomy.

\begin{theorem}\label{t-dicho2}
For a graph~$H$, {\sc Subset Feedback Vertex Set} on $H$-free graphs is polynomial-time solvable if 
$H\ssi sP_1+P_4$ for some $s\geq 0$, and is \NP-complete otherwise.
\end{theorem}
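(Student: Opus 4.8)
The plan is to split the dichotomy into its two directions, but first to pin down exactly which graphs $H$ satisfy $H\ssi sP_1+P_4$ for some $s\ge 0$. I would prove the characterization: $H\ssi sP_1+P_4$ for some $s\ge 0$ if and only if $H$ is a linear forest containing neither $2P_2$ nor $P_5$ as an induced subgraph. The forward implication is immediate, since $sP_1+P_4$ is itself a linear forest whose longest induced path is $P_4$ and which has no induced $2P_2$ (its only nontrivial component is a single $P_4$, and $P_4$ has no two independent edges), and all three properties pass to induced subgraphs. For the converse, if $H$ is a linear forest with no induced $P_5$ and no induced $2P_2$, then forbidding $P_5$ forces every path-component to have at most four vertices, while forbidding $2P_2$ forces at most one component to have at least two vertices (two such components would contain two independent edges). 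Hence $H=sP_1+P_j$ for some $s\ge 0$ and $j\in\{0,1,2,3,4\}$, so $H\ssi sP_1+P_4$.

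For the hardness direction, let $H\not\ssi sP_1+P_4$ for every $s$. If $H$ is not a linear forest, then by Theorem~\ref{t-kn} {\sc Feedback Vertex Set} is already \NP-complete on $H$-free graphs; setting $T:=V(G)$ makes every such instance an equivalent instance of {\sc Subset Feedback Vertex Set} on the same ($H$-free) graph, so the latter is \NP-complete here too. If instead $H$ is a linear forest, then by the characterization it contains an induced $2P_2$: either directly, or because it contains an induced $P_5$, whose two end-edges induce a $2P_2$. Since $2P_2\ssi H$, every $2P_2$-free graph is $H$-free; in particular the split graphs, being $(2P_2,C_4,C_5)$-free, form a subclass of the $H$-free graphs. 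As {\sc Subset Feedback Vertex Set} is \NP-complete on split graphs~\cite{FHKPV14}, it is \NP-complete on $H$-free graphs. This disposes of every case with $H\not\ssi sP_1+P_4$.

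For the polynomial-time direction, fix $H$ with $H\ssi sP_1+P_4$. Since $H\ssi sP_1+P_4$, every $H$-free graph is $(sP_1+P_4)$-free, so it suffices to solve {\sc Subset Feedback Vertex Set} in polynomial time on $(sP_1+P_4)$-free graphs for every fixed $s\ge 1$ (the case $s=0$, i.e.\ $P_4$-free graphs, is subsumed since $P_4\ssi P_1+P_4$). This is precisely the algorithmic result of Section~\ref{s-2p1p4u}, and it is the genuine obstacle: the bookkeeping above is routine, whereas the algorithm is where the real work lies. The structural starting point I would use is that for any fixed induced $P_4$, say $Q$, the vertices anticomplete to $Q$ induce a graph of independence number at most $s-1$, since an independent set of size $s$ among them together with $Q$ would form an induced $sP_1+P_4$. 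Thus $G$ decomposes into $Q$, its neighbourhood, and a part of bounded independence number.

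Building on this, the intended strategy is to branch over a bounded-size portion of the solution together with its interaction with $T$, reducing in polynomial time to polynomially many instances whose remaining ``hard core'' is essentially cograph-like; on $P_4$-free instances one can then invoke the constant-mim-width algorithm for {\sc (Weighted) Subset Feedback Vertex Set}~\cite{BPT19}, or argue directly. The difficulties I expect to confront are keeping the branching polynomial for every fixed $s$ and correctly certifying, after deletions, that no $T$-cycle survives — that is, controlling exactly which vertices of $T$ can remain on a cycle. Overcoming this is the crux of the theorem; once it is in place, combining it with the hardness analysis above yields the stated dichotomy.
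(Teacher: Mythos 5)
Your hardness direction and your reduction of the positive direction to $(sP_1+P_4)$-free graphs are both correct and match the paper: the characterization of the graphs $H$ with $H\ssi sP_1+P_4$ as the linear forests with no induced $2P_2$ or $P_5$, the use of Theorem~\ref{t-kn} with $T:=V(G)$ when $H$ is not a linear forest, and the split-graph hardness of~\cite{FHKPV14} when $H$ contains an induced $2P_2$ are exactly the paper's argument. The genuine gap is that you never prove the part you yourself identify as ``where the real work lies'': the polynomial-time algorithm for \textsc{Subset Feedback Vertex Set} on $(sP_1+P_4)$-free graphs (Theorem~\ref{t-unw}). Your sketch of that algorithm is both undeveloped and pointed in an unpromising direction: fixing an induced $P_4$, say $Q$, in the \emph{input} graph and observing that the vertices anticomplete to $Q$ have independence number less than $s$ gives no control whatsoever over $N(Q)$, which can be arbitrary, so this decomposition of $G$ does not by itself yield bounded mim-width or a polynomial amount of branching; you explicitly leave the ``crux'' unresolved, which means the theorem is not proved.

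For comparison, the paper classifies \emph{solutions} rather than decomposing the input. Define the core of a $T$-forest $F$ as its vertices of degree at most $2s-1$ in $F$. If the core contains an independent set $U$ of size $s$ (core-incomplete case, Lemma~\ref{l-poss}), one guesses $U$ together with the at most $s(2s-1)$ neighbours in $F$ of its vertices, deletes all other neighbours of $U$ from $G$, and notes that the resulting graph minus the at most $2s^2$ guessed vertices must be $P_4$-free (an induced $P_4$ there, together with $U$, would induce an $sP_1+P_4$); then Lemma~\ref{l-au} and Theorem~\ref{t-bpt19} apply. Otherwise (core-complete case), Lemma~\ref{l-two} shows that every vertex of $T\cap V(F)$ lies in the core and that $|T\cap V(F)|\leq 2s-2$. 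The final step is the one that is special to the unweighted setting and that your proposal lacks entirely: after guessing $T\cap V(F)$, at most $|T\cap V(F)|\leq 2s-2$ vertices of $V\setminus T$ can be missing from $F$, since otherwise $V\setminus T$ itself would be a larger $T$-forest; hence all $O(n^{4s-4})$ candidate solutions can be enumerated by brute force. Without these ingredients, or a worked-out substitute for them, your proposal establishes only the easy half of the dichotomy.
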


\noindent
Due to Theorems~\ref{t-dicho} and~\ref{t-dicho2} we now know where exactly the complexity jump between the weighted and unweighted versions occurs. 

Our proof technique for these results is based on the following two ideas. First, if  the complement $F_T$ of a $T$-feedback vertex set contains $s$ vertices of small degree in $F_T$, then we can ``guess'' these vertices and their neighbours in $F_T$. We then show that after removing all the other neighbours of small-degree vertices, we will obtain a graph of small mim-width. 
If $F_T$ does not contain $s$ small-degree vertices, we will argue that $F_T$ contains a bounded number of vertices of $T$. We guess these vertices and exploit their presence. This is straightforward for {\sc Subset Feedback Vertex Set} but more involved for {\sc Weighted Subset Feedback Vertex Set}. The latter was to be expected from the hardness construction for {\sc Weighted Subset Feedback Vertex Set} on $5P_1$-free graphs, in which $|T|=1$ (but as we will show our algorithm is able to deal with that construction due to the $(2P_1+P_4)$-freeness of the input graph).

We finish our paper with a brief discussion on related graph transversal problems and some open questions in Section~\ref{s-con}.

\section{Preliminaries}\label{s-pre}

Let $G=(V,E)$ be a graph. If $S\subseteq V$, then~$G[S]$ denotes the subgraph~of~$G$ induced by~$S$, and~$G-S$ is the graph $G[V\setminus S]$. 
We say that~$S$ is {\it independent} if $G[S]$ has no edges, and that $S$ is a {\it clique} and $G[S]$ is {\it complete} if every pair of vertices in $S$ is joined by an edge. A {\it  (connected) component} of~$G$ is a maximal connected subgraph of $G$. The \emph{neighbourhood} of a vertex $u\in V$ is the set $N(u)=\{v\; |\; uv\in E\}$. A graph is \emph{bipartite} if its vertex set can be partitioned into at most two independent sets.

Recall that for a graph $G=(V,E)$ and a subset $T \subseteq V$, a $T$-feedback vertex set is a set $S \subseteq V$ that intersects all $T$-cycles. Note that $G-S$ is a graph that has no $T$-cycles; we call such a graph a {\it $T$-forest}.
Thus the problem of finding a $T$-feedback vertex set of minimum size is equivalent to finding a $T$-forest of maximum size.
Similarly, the problem of finding a $T$-feedback vertex set of minimum weight is equivalent to finding a $T$-forest of maximum weight.
These maximisation problems are actually the problems that we will solve. Consequently, any $T$-forest will be called a {\it solution} 
for an instance $(G,T)$ or $(G,w,T)$, respectively, and our aim is to find a solution of maximum size or maximum weight, respectively.

Throughout our proofs we will need to check if some graph $F$ is a solution. The following lemma  shows that we can recognize solutions in linear time. The lemma combines results claimed but not proved in~\cite{LMRS17,PT20}. It is easy to show but for an explicit proof we refer to \cite[Lemma~3]{BJPP20}.

\begin{lemma}\label{l-test}
It is possible to decide in $O(n+m)$ time if a graph $F$ is a $T$-forest for some given set $T\subseteq V(F)$.
\end{lemma}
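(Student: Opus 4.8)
The plan is to reduce the problem to the classical task of detecting which vertices lie on a cycle, and then to solve that with a single depth-first search. The starting observation is that, by definition, a $T$-cycle is nothing but a cycle of $F$ passing through at least one vertex of $T$. Hence $F$ is a $T$-forest if and only if no vertex of $T$ lies on any cycle of $F$. So it suffices to mark, in linear time, every vertex of $F$ that lies on some cycle, and then to check whether any vertex of $T$ received a mark; if none did, $F$ is a $T$-forest, and otherwise it is not.

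To decide which vertices lie on a cycle I would use the block (biconnected component) decomposition of $F$. Recall that the blocks of $F$ are its maximal $2$-connected subgraphs together with its bridges and isolated vertices, and that every cycle of $F$, being $2$-connected, is contained in a single block. The key characterisation I would establish is: a vertex $v$ lies on a cycle of $F$ if and only if $v$ belongs to a block with at least three vertices. For the forward direction, if $v$ lies on a cycle $C$, then $C$ is contained in one block, which therefore has at least $|V(C)|\geq 3$ vertices. For the reverse direction, a block on at least three vertices is $2$-connected, so any two of its vertices lie on a common cycle; in particular $v$ does. The trivial blocks (single edges and isolated vertices) contain no cycle, which is exactly why bridge vertices and leaves are never marked.

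The algorithm then runs as follows. First I would compute the block decomposition of $F$ using the Hopcroft--Tarjan biconnected-components procedure, a depth-first search augmented with discovery times and low-link values, which lists all blocks in $O(n+m)$ time. During this pass I would mark every vertex that appears in a block of size at least three (equivalently, a block carrying at least one back edge). A final scan over $T$ then tests whether any vertex of $T$ is marked. Each phase is linear, so the whole procedure takes $O(n+m)$ time, as required.

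The part that needs the most care is the correctness of the block characterisation in the presence of cut vertices, which may belong to several blocks simultaneously: such a vertex must be marked as soon as it occurs in one nontrivial block, even when its remaining blocks are bridges. A triangle $\{a,b,c\}$ with a pendant edge $cd$ illustrates this, as $c$ lies in the triangle block and in the bridge $\{c,d\}$ and must be marked through the former, whereas the leaf $d$ lies only in a bridge and must stay unmarked. Marking per block occurrence rather than globally per vertex handles this automatically. One should also note that the argument uses simplicity of $F$, so that every two-vertex block is a single bridge; were parallel edges or loops allowed they would have to be treated as their own short cycles, but in our setting $F$ is simple and no such case arises.
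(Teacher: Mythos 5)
Your proof is correct: the reduction to marking all vertices that lie on a cycle, the characterisation that a vertex lies on a cycle of $F$ exactly when it belongs to a block on at least three vertices, and the $O(n+m)$ implementation via the Hopcroft--Tarjan biconnected-components algorithm are all sound (including the careful handling of cut vertices and the simplicity assumption). Note that the paper itself contains no proof of this lemma---it explicitly defers to Lemma~3 of~\cite{BJPP20}---and the explicit proof given there proceeds by the same block-decomposition route, so your argument is essentially the standard one and would serve as a valid self-contained replacement for the citation.
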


\noindent
In our proofs we will not refer to Lemma~\ref{l-test} explicitly, but we will use it implicitly every time we must check if some graph $F$ is a solution.

\section{The Weighted Variant}\label{s-2p1p4}

In this section, we present our polynomial-time algorithm for {\sc Weighted Subset Feedback Vertex Set} on $(2P_1+P_4)$-free graphs. 

\smallskip
\noindent
{\bf Outline.}
Our algorithm is based on the following steps. We first show in Section~\ref{s-special} how to compute a solution~$F$ that contains at most one vertex from~$T$, which moreover has small degree in $F$.  In Section~\ref{s-mim} we then show that if two vertices of small degree in a solution are non-adjacent, we can exploit the $(2P_1+P_4)$-freeness of the input graph $G$ to reduce to a graph $G'$ of bounded mim-width. The latter enables us to apply the algorithm of  Bergougnoux, Papadopoulos and Telle~\cite{BPT19}. In Section~\ref{s-algo} we deal with the remaining case, where all the vertices of small degree in a solution $F$ form a clique and $F$ contains at least two vertices of $T$. We first show that every vertex of $T$ that belongs to $F$ must have small degree in $F$. Hence, as the vertices in $T\cap V(F)$ must also induce a forest, $F$ has exactly two adjacent vertices of $T$, each of small degree in $F$. This structural result enables us to do a small case analysis.
We combine this step together with our previous algorithmic procedures into one algorithm.

\medskip
\noindent
{\bf Remark.}
Some of the lemmas in the following three subsections hold for $(sP_1+P_4)$-free graphs, for every $s\geq 2$, or even for general graphs. In order to re-use these lemmas in Section~\ref{s-2p1p4u}, where we consider {\sc Subset Feedback Vertex Set} for $(sP_1+P_4)$-free graphs, we formulate these lemmas as general as possible.

\subsection{Three Special Types of Solutions}\label{s-special}

In this section we will show how we can find three special types of solutions in polynomial time for $(2P_1+P_4)$-free graphs. These solutions have in common that they contain at most one vertex from the set~$T$ and moreover, this vertex has small degree in $F$.

Let $G=(V,E)$ be a graph and let $T\subseteq V$ be a subset of vertices of $G$. A $T$-forest~$F$ is a {\it $\leq$$1$-part solution} if $F$ contains at most one vertex from $T$ and moreover, if $F$ contains a vertex~$u$ from $T$, then $u$ has degree at most~$1$ in $F$. The following lemma holds for general graphs and is easy to see, but we need it for our algorithm.

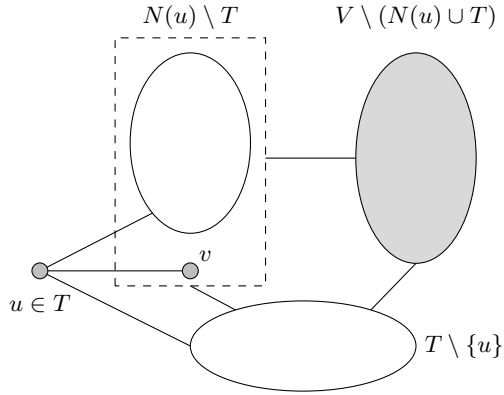
\begin{figure}
\begin{center}
\begin{tikzpicture}[scale=1] \draw[color=black,fill=white] (-3,3.7) ellipse (0.8cm and 1.2cm); \draw[color=black,fill=white] (-1.5,1) ellipse (1.5cm and 0.6cm); \draw[color=black,fill=gray!30!white] (0,3.5) ellipse (0.8cm and 1.4cm); \draw[dashed] (-4,1.8)--(-4,5.1)--(-2,5.1)--(-2,1.8)--(-4,1.8); \draw (-3.5,2.77)--(-5,2)--(-3,2) (-5,2)--(-3,1) (-3,1.8)--(-2.4,1.48) (-2,3.5)--(-0.8,3.5) (0,2.1)--(-0.6,1.48); \draw[fill=gray!50!white] (-5,2) circle [radius=3pt]; \draw[fill=gray!50!white] (-3,2) circle [radius=3pt]; \node[above] at (-3,5.1) {$N(u)\setminus T$}; \node[right] at (0,1) {$T\setminus \{u\}$}; \node[above] at (0,5.1) {$V\setminus (N(u)\cup T)$}; \node[right] at (-3,2.2) {$v$}; \node[below] at (-5,1.8) {$u\in T$}; \end{tikzpicture}
\end{center}
\caption{The $\leq$$1$-part solution (in grey) constructed in the proof of Lemma~\ref{l-1part}. Such a solution consists of the vertex $u\in T$, the set $V\setminus (N(u)\cup T)$ and possibly a vertex $v\in N(u)\setminus T$.}\label{f-1}
\end{figure}

\begin{lemma}\label{l-1part}
For a graph $G=(V,E)$ with a positive vertex weighting~$w$ and a set $T\subseteq V$, it is possible to find a $\leq$$1$-part solution of maximum weight in polynomial time.
\end{lemma}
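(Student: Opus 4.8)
The plan is to exploit two features of the $\leq 1$-part condition: such a solution contains at most one vertex of $T$, and that vertex, if present, has degree at most~$1$ in the solution. The key observation that makes everything easy is that a vertex of degree at most~$1$ lies on no cycle; hence the only $T$-vertex can never sit on a $T$-cycle, while any cycle avoiding $T$ is by definition not a $T$-cycle. Consequently, once we have fixed which vertex of $T$ (if any) the solution uses and ensured its degree is at most~$1$, the $T$-forest requirement is automatically satisfied and imposes no further restriction. The maximization then becomes purely greedy: since all weights are positive, we simply include every remaining admissible vertex.

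Accordingly, I would organize the search by guessing the interaction between the solution and $T$, which yields only a few families of candidates. First, for solutions disjoint from $T$ we have $F\subseteq V\setminus T$; as there are no $T$-vertices there are no $T$-cycles, so the unique maximum-weight candidate is $F_0:=V\setminus T$. Next, for each $u\in T$ I would consider the solutions whose single $T$-vertex is $u$. If $u$ is isolated (degree~$0$), then $F$ must avoid $N(u)$ and all of $T\setminus\{u\}$, so the best candidate is $\{u\}\cup(V\setminus(N(u)\cup T))$. If instead $\deg_F(u)=1$, I would additionally guess its unique neighbour $v\in N(u)\setminus T$ and take $\{u,v\}\cup(V\setminus(N(u)\cup T))$; forbidding $F$ from meeting $N(u)$ again keeps $\deg_F(u)=1$, and excluding $T\setminus\{u\}$ keeps $u$ the only $T$-vertex. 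This is exactly the construction depicted in Fig.~\ref{f-1}.

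Finally, I would argue correctness and running time. Every $\leq 1$-part solution falls into one of these families according to whether it meets $T$ and, if so, the degree of its $T$-vertex; within each family, positivity of $w$ shows that the listed candidate is weight-maximal, because adding any further admissible vertex only increases the weight and never creates a $T$-cycle (the sole $T$-vertex still has degree at most~$1$). The algorithm thus enumerates $O(n+m)$ candidates---one for the disjoint case, and for each $u\in T$ one degree-$0$ candidate together with one degree-$1$ candidate per neighbour $v\in N(u)\setminus T$---computes the weight of each, and returns the heaviest. The only thing to get right is the bookkeeping of the forbidden set $N(u)\cup T$ in the two degree cases; there is no genuine obstacle, which is why the statement is advertised as easy to see, but the positivity of the weighting is what validates the greedy completion.
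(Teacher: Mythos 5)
Your proposal is correct and follows essentially the same approach as the paper's proof: the same enumeration of candidates ($V\setminus T$; for each $u\in T$ the set $\{u\}\cup(V\setminus(N(u)\cup T))$; and for each $u\in T$ and $v\in N(u)\setminus T$ the set $\{u,v\}\cup(V\setminus(N(u)\cup T))$), with the same greedy completion justified by positivity of the weights. Your write-up merely spells out the correctness details (degree-$\leq 1$ vertices lie on no cycle, so the $T$-forest condition is automatic) that the paper leaves as ``readily seen.''
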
 

\begin{proof}
The solution $F=G-T$ has maximum weight over all solutions that contain no vertex of $T$.
Let $|V|=n$. We now consider all $O(n)$ possibilities of choosing a vertex $u\in T$ to be in the $\leq$$1$-part solution $F$ that we are looking for. Consider $u\in T$ to be in $F$. In the case where $u$ has no neighbour in $F$, we can set $F$ to be the subgraph of $G$ induced by $u$ and all non-neighbours of $u$ in $V\setminus T$.
Otherwise, if $N(u)\setminus T\neq \emptyset$, then we consider all $O(n)$ possibilities of choosing a neighbour $v\in N(u)\setminus T$ 
to be in $F$ as well. Afterwards, we define $F$ to be the subgraph induced by $u$, $v$ and all non-neighbours of $u$ in $V\setminus T$; see also Fig.~\ref{f-1}. It is readily seen that this algorithm is correct and runs in polynomial time.
\qed
\end{proof}

Let $G=(V,E)$ be a graph and let $T\subseteq V$ be a subset of vertices of $G$. A $T$-forest~$F$ is a {\it $2$-part solution} if $F$ contains exactly one vertex $u$ of $T$ and $u$ has exactly two neighbours $v_1$ and $v_2$ in $F$. We say that $u$ is the {\it center} of $F$ and that $v_1$ and $v_2$ are the {\it center neighbours}. Let $A$ be the connected component of $F$ that contains $u$. Then we say that $A$ is the {\it center component} of~$F$.
 We make the following observation.

\begin{lemma}\label{l-sep}
Let $A$ be the center component of a $2$-part solution $F$ with center~$u$. Then $A-u$ consists of exactly two connected components (each of which containing a different center neighbour). 
\end{lemma}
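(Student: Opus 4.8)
The plan is to exploit the defining property that $F$ is a $T$-forest, so it contains no cycle through any vertex of $T$, and in particular no cycle through $u$, since $u\in T$. First I would record the easy setup: because $A$ is the connected component of $F$ containing $u$ and the two center neighbours $v_1,v_2$ are adjacent to $u$, both $v_1$ and $v_2$ lie in $A$; moreover $v_1\neq v_2$ (otherwise $u$ would have only one neighbour) and neither equals $u$, so $v_1$ and $v_2$ are genuine, distinct vertices of $A-u$.

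Next I would bound the number of components of $A-u$ from above by $2$. Since $A$ is connected, every vertex $x\neq u$ of $A$ is joined to $u$ by a path in $A$. The final edge of such a path enters $u$, and as $v_1,v_2$ are the only neighbours of $u$ in $F$ (hence in $A$), this edge must be $uv_1$ or $uv_2$. Deleting $u$ from the path leaves a walk contained in $A-u$ that connects $x$ to $v_1$ or to $v_2$. Therefore every vertex of $A-u$ lies in the component of $v_1$ or in the component of $v_2$, so $A-u$ has at most two connected components.

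Finally I would show these two components are distinct, which gives the matching lower bound. Suppose for contradiction that $v_1$ and $v_2$ lie in the same component of $A-u$. Then there is a path $P$ from $v_1$ to $v_2$ in $A-u$, i.e.\ avoiding $u$. Appending the edges $uv_1$ and $uv_2$ to $P$ produces a cycle through $u$. Since $u\in T$, this cycle is a $T$-cycle, contradicting that $F$ is a $T$-forest. Hence $v_1$ and $v_2$ lie in different components, and together with the previous paragraph this forces $A-u$ to have exactly two components, each containing a different center neighbour.

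I do not anticipate any genuine obstacle here; the lemma is structurally simple. The only point requiring a little care is the upper bound, namely the standard observation that any path from an arbitrary vertex to $u$ must reach $u$ through one of its exactly two neighbours, which is what caps the number of components at two. The $T$-forest (no cycle through $u$) argument then supplies the complementary fact that the two neighbours cannot share a component.
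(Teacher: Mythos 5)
Your proof is correct and follows essentially the same route as the paper: the upper bound of two components comes from $u$ having degree~$2$ in $A$, and the lower bound comes from observing that a path between $v_1$ and $v_2$ in $A-u$ would close a cycle through $u\in T$, contradicting that $F$ is a $T$-forest. You merely spell out in more detail the steps the paper states tersely.
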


\begin{proof}
Note that $A-u$ can contain at most two connected components as $u$ has degree~$2$ in $A$.
For a contradiction, assume that $A-u$ is connected. Then $A-u$ contains a path between the two center neighbours of $u$. This implies that $u\in T$ is on a cycle in $A$, contradicting the fact that $F$ is a $T$-forest.
\qed
\end{proof}

We will prove how to find $2$-part solutions in polynomial time even for general graphs. In order to do this, we will reduce to a classical problem, namely:

\optproblemdef{{\sc Weighted Vertex Cut}}{a graph $G=(V,E)$, two distinct 
non-adjacent terminals $t_1$ and $t_2$, and a positive vertex weighting~$w$.}{determine a set $S\subseteq V\setminus \{t_1,t_2\}$ of minimum weight such that  $t_1$ and $t_2$ are in different connected components of $G-S$.}

\noindent
The {\sc Weighted Vertex Cut} problem is well known to be polynomial-time solvable by standard network flow techniques.

\begin{figure}
\begin{center}
\begin{tikzpicture}[scale=1] \draw[color=black,fill=white] (-3,3.7) ellipse (0.8cm and 1.2cm); \draw[color=black,fill=white] (1,4.5) ellipse (1.5cm and 0.6cm); \draw[color=black,fill=gray!30!white] (1,2.5) ellipse (0.8cm and 0.8cm); \draw[color=black,fill=gray!30!white] (1,1.5) ellipse (0.8cm and 0.8cm); \draw[color=black,fill=white] (1,2) ellipse (0.7cm and 0.5cm); \draw[dotted] (-3,2)--(-3,1); \draw[dashed] (-4,0.8)--(-4,5.1)--(-2,5.1)--(-2,0.8)--(-4,0.8); \draw (-3.5,2.77)--(-5,2)--(-3,2) (-5,2)--(-3,1) (0.2,2.5)--(-3,2) (0.2,1.5)--(-3,1) (-0.5,4.5)--(-2,4) (1,3.9)--(1,3.3); \draw[fill=gray!50!white] (-5,2) circle [radius=3pt] (-3,2) circle [radius=3pt] (-3,1) circle [radius=3pt]; \node[above] at (-3,5.1) {$N(u)\setminus T$}; \node[above] at (1,5.1) {$T\setminus \{u\}$}; \node[right] at (2.2,2) {$V\setminus (N(u)\cup T)$}; \node[right] at (-3,2.2) {$v_1$}; \node[right] at (-3,1.2) {$v_2$}; \node[below] at (-5.1,1.75) {$u\in T$};
\end{tikzpicture}
\end{center}
\caption{The $2$-part solution (in grey) constructed in the proof of Lemma~\ref{l-2part}. The graph $G'$ consists of the vertex $u$, the independent set $\{v_1,v_2\}$ and the set $V\setminus (N(u)\cup T)$. The $2$-part solution is obtained from $G'$ after deleting the vertices of a minimum weight vertex cut of $G'-u$ that separates $v_1$ from $v_2$.}\label{f-2}
\end{figure}
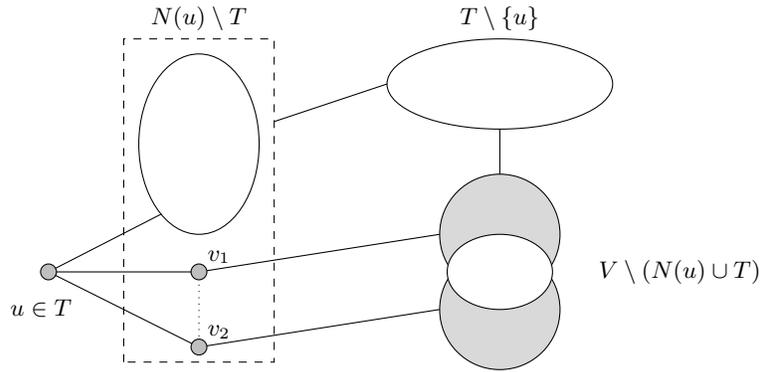

\begin{lemma}\label{l-easy}
{\sc Weighted Vertex Cut} is polynomial-time solvable.
\end{lemma}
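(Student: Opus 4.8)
The plan is to reduce {\sc Weighted Vertex Cut} to a minimum $t_1$-$t_2$ cut problem in an auxiliary directed network and then invoke the max-flow min-cut theorem together with any polynomial-time maximum-flow algorithm. The only mildly delicate points are to force every minimum cut of the network to correspond to a set of \emph{vertices} (rather than edges) of $G$, and to handle rational weights; both are routine.

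First I would construct a directed graph $D$ using the standard \emph{vertex-splitting} gadget. For every vertex $v \in V \setminus \{t_1,t_2\}$ I introduce two copies $v_{\mathrm{in}}$ and $v_{\mathrm{out}}$ joined by an arc $v_{\mathrm{in}} \to v_{\mathrm{out}}$ of capacity $w(v)$, and for the terminals I keep single vertices $t_1,t_2$ (equivalently, I give their internal arcs capacity $+\infty$, reflecting that they may not be placed in $S$). For every edge $uv \in E$ I add two arcs $u_{\mathrm{out}} \to v_{\mathrm{in}}$ and $v_{\mathrm{out}} \to u_{\mathrm{in}}$, each of capacity $+\infty$ (in practice, any value exceeding $w(V)$ suffices). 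I set the source to be $t_1$ and the sink to be $t_2$. Since $t_1$ and $t_2$ are non-adjacent there is no direct infinite-capacity arc between them, so a finite cut exists.

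Next I would verify the correspondence. Any finite-capacity $t_1$-$t_2$ cut of $D$ can use only the internal vertex-arcs $v_{\mathrm{in}} \to v_{\mathrm{out}}$, since the edge-arcs have capacity exceeding $w(V)$. Letting $S$ be the set of vertices $v$ whose internal arc lies in such a cut, one checks that removing $S$ disconnects $t_1$ from $t_2$ in $G$: any $t_1$-$t_2$ path in $G-S$ would lift to a directed $t_1$-$t_2$ path in $D$ avoiding every cut arc, a contradiction. Conversely, any vertex cut $S$ of $G$ yields a $t_1$-$t_2$ cut of $D$ of capacity $w(S)$. Hence the minimum capacity of a $t_1$-$t_2$ cut in $D$ equals the minimum weight of a vertex cut in $G$, and the minimizer is read off directly from the cut.

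Finally I would argue the running time. By the max-flow min-cut theorem the minimum cut value equals the value of a maximum $t_1$-$t_2$ flow, which a strongly polynomial algorithm such as Edmonds--Karp computes in time polynomial in $|V|$ and $|E|$, independent of the capacities. To cope with the rational weighting $w$ I would first scale all weights by a common denominator to obtain integer capacities; this changes neither the optimal cut nor the asymptotic running time. The set $S$ is then recovered from the residual network by a single graph search. The main, but entirely standard, obstacle is the bookkeeping in the correspondence argument, namely ensuring that the infinite-capacity edge-arcs never appear in a minimum cut, so that every minimum cut is witnessed purely by vertices of $G$.
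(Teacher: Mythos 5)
Your proposal is correct and takes exactly the approach the paper has in mind: the paper gives no explicit proof of this lemma, asserting only that {\sc Weighted Vertex Cut} is ``well known to be polynomial-time solvable by standard network flow techniques.'' Your vertex-splitting reduction to a minimum $t_1$-$t_2$ cut, solved by a strongly polynomial maximum-flow algorithm, is precisely that standard technique, written out in full and with the relevant details (finite cuts exist since $t_1$ and $t_2$ are non-adjacent; edge-arcs of capacity exceeding $w(V)$ never enter a minimum cut; rational weights are harmless) handled correctly.
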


\noindent
We use Lemma~\ref{l-easy} to prove the following lemma.

\begin{lemma}\label{l-2part}
For a graph $G=(V,E)$ with a positive vertex weighting~$w$ and a set $T\subseteq V$, it is possible to find a $2$-part solution of maximum weight in polynomial time.
\end{lemma}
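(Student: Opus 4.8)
The plan is to reduce the search for a maximum-weight $2$-part solution to a polynomial number of instances of {\sc Weighted Vertex Cut}, exploiting Lemma~\ref{l-easy}. First I would brute-force the three combinatorial choices that define such a solution: iterate over all $O(n)$ candidates for the center $u\in T$, and over all $O(n^2)$ pairs of distinct, non-adjacent candidates $v_1,v_2\in N(u)\setminus T$ for the two center neighbours. (If $v_1v_2\in E$, then $u,v_1,v_2$ induce a triangle through $u\in T$, so no $2$-part solution with these center neighbours exists, and the pair may be discarded.) For each surviving triple I would form the vertex set $C=\{u,v_1,v_2\}\cup\bigl(V\setminus(N(u)\cup T)\bigr)$ and work inside the induced subgraph $G[C]$; see Fig.~\ref{f-2}.

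The key observation is that, once $u,v_1,v_2$ are fixed, being a $2$-part solution with this data becomes a pure connectivity condition. Indeed, any candidate $F$ is an induced subgraph $G[W]$ (solutions are obtained as $G-S$): its only $T$-vertex is $u$, so $W$ meets $T$ only in $u$, and since $G[W]$ is induced and $u$ has exactly the neighbours $v_1,v_2$ in $F$, no further neighbour of $u$ lies in $W$; hence every vertex of $W\setminus\{u,v_1,v_2\}$ is a non-neighbour of $u$ outside $T$, i.e.\ lies in $C$, so $W\subseteq C$. Moreover, as $u$ is the unique $T$-vertex of $F$, the $T$-cycles of $F$ are precisely the cycles through $u$, and each such cycle must use both edges $uv_1$ and $uv_2$. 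Thus $F$ is a $T$-forest if and only if $v_1$ and $v_2$ lie in different connected components of $F-u$, consistently with Lemma~\ref{l-sep}.

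Therefore, maximizing $w(F)$ over all $2$-part solutions with the fixed triple $(u,v_1,v_2)$ amounts to choosing $W\subseteq C$ containing $u,v_1,v_2$, of maximum weight, so that $v_1,v_2$ are separated in $G[W]-u$. Since $w(C)$ is fixed, this is equivalent to deleting a \emph{minimum}-weight set $S\subseteq C\setminus\{u,v_1,v_2\}$ from $G[C]-u$ so as to separate $v_1$ from $v_2$ --- exactly an instance of {\sc Weighted Vertex Cut} on $G[C]-u$ with terminals $v_1,v_2$, which are non-adjacent by our choice. I would solve it via Lemma~\ref{l-easy}, obtaining a cut $S$, and output $F=G[C\setminus S]$; by construction $u,v_1,v_2\in F$, the neighbours of $u$ in $F$ are exactly $v_1,v_2$, and $F$ is a $T$-forest, so $F$ is a legitimate $2$-part solution. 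Finally I would return the heaviest $F$ found over all triples.

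Correctness splits into two directions. Soundness is immediate from the previous paragraph: every output is a valid $2$-part solution. For completeness, given an optimal $2$-part solution $F^\star=G[W^\star]$ with center $u^\star$ and center neighbours $v_1^\star,v_2^\star$, the set $C\setminus W^\star$ is a feasible $v_1^\star$--$v_2^\star$ cut in $G[C]-u^\star$ of weight $w(C)-w(F^\star)$, so the minimum-weight cut $S$ found by the algorithm satisfies $w(S)\le w(C)-w(F^\star)$ and hence $w(G[C\setminus S])\ge w(F^\star)$. As there are $O(n^3)$ triples and each vertex-cut computation is polynomial, the total running time is polynomial. I do not expect a genuine obstacle; the only point requiring care is the completeness argument, specifically verifying that the optimal solution's vertex set is contained in $C$ and that its complement within $C$ is a valid separating cut, which relies crucially on solutions being \emph{induced} subgraphs.
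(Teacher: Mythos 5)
Your proposal is correct and follows essentially the same route as the paper: guess the triple $(u,v_1,v_2)$ in $O(n^3)$ ways, restrict to the graph on $\{u,v_1,v_2\}\cup\bigl(V\setminus(N(u)\cup T)\bigr)$ (the paper's $G'$), and solve {\sc Weighted Vertex Cut} on $G'-u$ with terminals $v_1,v_2$ via Lemma~\ref{l-easy}, using Lemma~\ref{l-sep} to justify the reduction. Your write-up merely spells out the soundness and completeness directions that the paper leaves implicit.
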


\begin{proof}
Let $|V|=n$. We consider every possible choice of center~$u$ and center neigbours $v_1,v_2$. Note there are $O(n^3)$ options. For each choice of $u,v_1,v_2$ we do as follows.
First we check that $u\in T$, $\{v_1,v_2\}\cap T=\emptyset$, $u$ is adjacent to both $v_1$ and $v_2$ and there is no edge between $v_1$ and $v_2$. If the check fails, the guess is discarded.
We remove every neighbour of $u$ that is not in $\{v_1,v_2\}$. We also remove every vertex of $T\setminus \{u\}$ from $G$, as $u$ will be the only vertex of $F$ that belongs to $T$.
We denote the resulting graph by $G'$. By Lemma~\ref{l-sep}, it remains to solve {\sc Weighted Vertex Cut} for the instance that consists of $G'-u$, with $v_1$ and $v_2$ as terminals and the restriction of $w$ to $V(G')\setminus \{u\}$; see also Fig.~\ref{f-2}.
By Lemma~\ref{l-easy}, this can be done in polynomial time. In the end we return a solution of maximum weight. As the total number of different options to consider is $O(n^3)$, our algorithm runs in polynomial time.
\qed
\end{proof}

\noindent
Let $G=(V,E)$ be a graph and let $T\subseteq V$ be a subset of vertices of $G$.  A $T$-forest~$F$ is a {\it $3$-part solution} if $F$ contains exactly one vertex $u$ of $T$ and $u$ has exactly three neighbours $v_1,v_2,v_3$ in $F$. Again we say that $u$ is the {\it center} of $F$; that $v_1,v_2,v_3$ are the {\it center neighbours}; and that the connected component of~$F$ that contains $u$ is the {\it center component} of~$F$.
We make the same observation as we did in Lemma~\ref{l-sep}.

\begin{lemma}\label{l-sep2}
Let $A$ be the center component of a $3$-part solution $F$ with center~$u$. Then $A-u$ consists of exactly three connected components (each of which containing a different center neighbour). 
\end{lemma}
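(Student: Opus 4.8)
The plan is to mirror the proof of Lemma~\ref{l-sep} almost verbatim, adapting the degree count from $2$ to $3$. Let $A$ be the center component of a $3$-part solution $F$ with center $u$, and recall that by definition $u$ has exactly three neighbours $v_1,v_2,v_3$ in $F$, all lying in $A$. First I would observe that deleting $u$ from $A$ can split $A$ into at most three connected components, since $u$ has degree exactly~$3$ in $A$ and the number of components of $A-u$ cannot exceed $\deg_A(u)$. This gives the upper bound ``at most three components'' immediately.

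The substance of the lemma is the matching lower bound, namely that $A-u$ has \emph{exactly} three components, with the three center neighbours landing in distinct ones. I would argue this by contradiction: suppose two of the center neighbours, say $v_i$ and $v_j$, lie in the same component of $A-u$. Then there is a $v_i$--$v_j$ path in $A-u$, i.e.\ a path avoiding $u$. Together with the two edges $uv_i$ and $uv_j$, this path closes into a cycle through $u$. Since $u\in T$, that cycle is a $T$-cycle, contradicting the assumption that $F$ is a $T$-forest. Hence no two center neighbours share a component, so the three of them occupy three distinct components; combined with the upper bound this forces $A-u$ to have exactly three components, one containing each $v_k$. I should also note that $A$ being connected guarantees every component of $A-u$ is adjacent to $u$, so no ``extra'' components without a center neighbour can appear, which is why the count is exactly three rather than three-plus-isolated-pieces.

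I expect no real obstacle here: the argument is structurally identical to Lemma~\ref{l-sep}, and the only thing to keep straight is that the contradiction is derived for \emph{each} pair of center neighbours independently, so that all three components are genuinely separated. The one point worth stating carefully is the direction of the inequality in the degree bound versus the separation argument, making sure the two bounds meet at exactly three. Since the earlier lemma already establishes the template, the write-up would be a short paragraph invoking the same contradiction applied pairwise.
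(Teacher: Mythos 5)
Your proposal is correct and follows essentially the same argument as the paper: bound the number of components of $A-u$ by $\deg_A(u)=3$, then derive a contradiction from a path in $A-u$ between two center neighbours, which together with $u$ closes a $T$-cycle. The only (cosmetic) difference is that you argue pairwise separation of the center neighbours rather than assuming $A-u$ has at most two components and noting some component then contains two of them --- these are the same counting step phrased contrapositively.
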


\begin{proof}
Note that $A-u$ can contain at most three connected components as $u$ has degree~$3$ in $A$. For a contradiction, assume that $A-u$ contains at most two connected components. Then one of these components contains at least two of the three center neighbours of $u$.
Then $A-u$ contains a path between these two center neighbours. This implies that $u\in T$ is on a cycle in $A$, contradicting the fact that $F$ is a $T$-forest.
\qed
\end{proof}

Let $A$ be the center component of a $3$-part solution $F$ with center~$u$. By Lemma~\ref{l-sep2}, we have that $A-u$ consists of exactly three connected components $D_1$, $D_2$ and $D_3$. We say that $F$ is {\it full} if every $D_i$ $(i=1,2,3)$ has size at least~$2$. We now prove the following structural lemma for $(2P_1+P_4)$-free graphs.

\begin{lemma}\label{l-full}
Let $G$ be a $(2P_1+P_4)$-free graph.
Let $A$ be the center component of a full $3$-part solution $F$ for $G$ that has center~$u$. Then $F=A$ and every connected component of $A-u$ is a complete graph.
\end{lemma}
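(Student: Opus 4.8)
The plan is to exploit $(2P_1+P_4)$-freeness as a forbidden-configuration obstruction: I will show that whenever $F\neq A$, or some component $D_i$ of $A-u$ fails to be complete, the graph $G$ contains an induced $2P_1+P_4$, contradicting the hypothesis. Throughout I rely on three basic facts. First, $F$ is an induced subgraph of $G$ (it is $G$ minus a vertex set), so non-adjacency in $F$ is non-adjacency in $G$. Second, by the $3$-part assumption $u$ has exactly the neighbours $v_1,v_2,v_3$ in $F$ with $v_i\in D_i$, and since $D_1,D_2,D_3$ are distinct components of $A-u$ (Lemma~\ref{l-sep2}) they are pairwise anticomplete. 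Third, fullness gives $|D_i|\ge 2$ for every $i$, so each $v_i$ has a neighbour inside its own $D_i$ and each $D_i$ contains a vertex other than $v_i$; these spare vertices are exactly what I will feed into the forbidden configuration.

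To prove $F=A$, suppose a component $B\neq A$ of $F$ exists and pick $b\in B$. Using $|D_1|\ge 2$, choose a neighbour $p\in D_1$ of $v_1$; then $p-v_1-u-v_2$ is an induced $P_4$, since $p\not\sim u$ (as $p\neq v_1$) and $p\not\sim v_2$, $v_1\not\sim v_2$ (distinct components of $A-u$). Now pick $d\in D_3\setminus\{v_3\}$, possible since $|D_3|\ge 2$. Then $d$ is anticomplete to this $P_4$: it lies in a third component, so it misses $p,v_1,v_2$, and $d\neq v_3$ gives $d\not\sim u$. The vertex $b$ lies in a different component of $F$ than all of $A$, hence is anticomplete to the $P_4$ and to $d$. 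Thus $\{b,d\}$ together with $p-v_1-u-v_2$ induces $2P_1+P_4$, a contradiction, so $F=A$.

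It remains to show each $D_i$ is complete; by symmetry I argue for $i=3$, assuming for contradiction that $D_3$ is not complete. If $D_3\setminus\{v_3\}$ already contains two non-adjacent vertices $x,y$, then reusing the $P_4$ $p-v_1-u-v_2$ from above, the pair $\{x,y\}$ is anticomplete to it ($x,y\neq v_3$ gives non-adjacency to $u$, and they sit in a different component from $p,v_1,v_2$), again yielding $2P_1+P_4$. Otherwise $D_3\setminus\{v_3\}$ is a clique, so the witnessing non-adjacency of $D_3$ must involve $v_3$: there is $x\in D_3\setminus\{v_3\}$ with $v_3\not\sim x$. Taking a neighbour $z$ of $v_3$ in $D_3$ (which lies in the clique, hence $z\sim x$), the path $x-z-v_3-u$ is an induced $P_4$. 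Now the two independent vertices are drawn from the \emph{other} two components: $a\in D_1\setminus\{v_1\}$ and $b\in D_2\setminus\{v_2\}$, which exist by fullness. They are anticomplete to $\{x,z,v_3\}\subseteq D_3$ and, being distinct from $v_1,v_2$, also to $u$; and $a\not\sim b$. This is once more a forbidden $2P_1+P_4$, completing the proof.

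The main obstacle is this last subcase, where the only non-adjacency inside $D_3$ is forced to use the distinguished vertex $v_3$; there the convenient $P_4$ routed through $u$ cannot be kept disjoint from the non-adjacent pair, so I instead route a $P_4$ \emph{through} $D_3$ and $u$ and must locate the two independent vertices elsewhere. This is precisely where fullness becomes indispensable: it guarantees that both remaining components $D_1$ and $D_2$ contain a vertex distinct from their center neighbour, supplying the required $2P_1$ that lies anticomplete to the path.
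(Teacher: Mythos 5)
Your proof is correct and takes essentially the same route as the paper: the argument for $F=A$ uses the identical $2P_1+P_4$ configuration, and your two cases for completeness of a component correspond exactly to the paper's two forbidden configurations (a $P_4$ through $u$ and two other components with the non-adjacent pair inside the third, and a $P_4$ routed inside one component plus $u$ with the $2P_1$ drawn from the other two components). The only cosmetic difference is bookkeeping: the paper first shows $v_i$ is adjacent to all of $D_i$ via a distance-$2$ argument and then handles non-adjacent pairs avoiding $v_i$, whereas you split on whether the non-adjacency involves $v_3$ and use the clique structure of $D_3\setminus\{v_3\}$ to build the same path.
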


\begin{proof}
Let $v_1,v_2,v_3$ be the center neighbours of $F$.
By Lemma~\ref{l-sep2}, we have that $A-u$ consists of exactly three connected components $D_1$, $D_2$ and $D_3$, where we assume that $v_i\in V(D_i)$ for $i\in \{1,2,3\}$. As $F$ is full, we have that for $i\in \{1,2,3\}$, $D_i$ contains a vertex $x_i$ adjacent to $v_i$.

We will first prove that $F=A$. Suppose that $F$ contains a vertex $y\notin V(A)$. Then $x_1,v_1,u,v_2,x_3,y$ induce a $2P_1+P_4$ in $G$, contradicting the $(2P_1+P_4)$-freeness of $G$.

We will now prove that every $D_i$ is complete. We consider $D_1$, as the arguments for $D_2$ and $D_3$ will be the same. First suppose that $D_1$ contains a vertex $z$ not adjacent to $v_1$. 
As $D_1$ is connected we may assume without loss of generality that $z$ has distance~$2$ from $v_1$ in $D_1$, and so $D_1$ contains a vertex $z'$ that is adjacent to both $v_1$ and $z$. 
Then $z,z',v_1,u,x_2,x_3$ induce a $2P_1+P_4$ in $G$, a contradiction. Hence, $v_1$ is adjacent to every other vertex in $D_1$. Now suppose that $D_1$ still contains two non-adjacent vertices $p$ and $q$. As $v_1$ is adjacent to every vertex of $V(D_1)\setminus \{v_1\}$, we find that $v_1\notin \{p,q\}$. Then $x_2,v_2,u,v_3,p,q$ induce a $2P_1+P_4$ in $G$, a contradiction.
\qed
\end{proof}

\noindent
We are now ready to prove the following lemma.

\begin{lemma}\label{l-3part}
For a $(2P_1+P_4)$-free graph $G=(V,E)$ with a positive vertex weighting~$w$ and a set $T\subseteq V$, it is possible to find a $3$-part solution of maximum weight in polynomial time.
\end{lemma}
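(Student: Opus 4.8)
The plan is to reduce the search for a maximum-weight $3$-part solution to polynomially many subproblems by guessing the center and its three center neighbours and then exploiting the clique structure guaranteed by Lemma~\ref{l-full}. First I would guess the center $u\in T$ and the three center neighbours $v_1,v_2,v_3$; there are $O(n^4)$ choices, and for each I would look for the heaviest $3$-part solution with exactly this center and these center neighbours, returning the best over all guesses. For a fixed guess, being such a solution amounts to choosing a vertex set $W$ with $u,v_1,v_2,v_3\in W$, $W\cap T=\{u\}$ and $W\cap N(u)=\{v_1,v_2,v_3\}$, such that $v_1,v_2,v_3$ lie in three distinct components of $G[W]-u$ (the reasoning of Lemma~\ref{l-sep2} shows this is equivalent to $G[W]$ being a $T$-forest with $u$ as its only $T$-vertex), and then maximising $w(W)$. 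Phrased this way the task is a maximum-weight three-terminal vertex separation problem, intractable in general, so the whole point is to use $(2P_1+P_4)$-freeness; I would split according to how many of the three components $D_1,D_2,D_3$ of $G[W]-u$ are \emph{trivial}, i.e. equal to a single $v_i$.

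If at least two of the $D_i$ are trivial there is essentially nothing to separate: a trivial $D_i$ means $v_i$ has no neighbour in $W$ other than $u$, so I would delete all of $N(v_i)\setminus\{u\}$ and then keep every remaining admissible vertex, which is optimal since all weights are positive. If exactly one $D_i$, say $D_3$, is trivial, I would first delete $N(v_3)\setminus\{u\}$ to isolate $v_3$ and then separate the two remaining terminals $v_1$ and $v_2$ at minimum weight; this is precisely an instance of \textsc{Weighted Vertex Cut}, solvable in polynomial time by Lemma~\ref{l-easy}, exactly as in the proof of Lemma~\ref{l-2part}. Since the (non-)triviality of a component is a property of $W$ rather than a constraint I must enforce, taking the best outcome over all patterns of trivial and non-trivial components is safe.

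The substantial case is the \emph{full} one, where all three $D_i$ have size at least $2$; here Lemma~\ref{l-full} applies, so $F=A$, each $D_i$ is a clique and $v_i$ is adjacent to every vertex of $D_i$. For each $i$ I would define the candidate set $C_i$ of vertices that may join $D_i$, namely the vertices of $V\setminus(T\cup N(u))$ adjacent to $v_i$ but to neither $v_j$ nor $v_k$ (where $\{i,j,k\}=\{1,2,3\}$); then $D_i\setminus\{v_i\}$ is a clique contained in $C_i$. The key structural observation is that $G[C_i]$ is a \emph{cograph}: the vertices $v_j,v_k$ are non-adjacent and have no neighbour in $C_i$, so an induced $P_4$ inside $C_i$ would, together with $v_j$ and $v_k$, form an induced $2P_1+P_4$, contradicting $(2P_1+P_4)$-freeness. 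As maximum-weight cliques in cographs are computable in polynomial time, the heaviest choice for each individual $D_i$ is available.

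The main obstacle is the coupling between the three cliques: the components must be pairwise \emph{non-adjacent}, and one cannot simply take a heaviest clique in each $G[C_i]$ independently, because $G$ may contain edges between $C_i$ and $C_j$. Such cross edges are genuinely not excluded by $(2P_1+P_4)$-freeness, since the only vertices guaranteed to lie far from such an edge belong to a single clique and therefore contain no non-adjacent pair to serve as the $2P_1$. I would therefore recast the full case as a single maximum-weight independent set instance on the auxiliary graph obtained from $G[C_1\cup C_2\cup C_3]$ by complementing the edges inside each $C_i$ while retaining the cross edges between distinct $C_i$; an independent set there is exactly a choice of pairwise non-adjacent cliques $Q_1\subseteq C_1$, $Q_2\subseteq C_2$, $Q_3\subseteq C_3$ of total maximum weight. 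The crux, and where I expect the real work to lie, is to show that this auxiliary graph — three cograph complements glued by cross edges that are themselves constrained by the $(2P_1+P_4)$-freeness of $G$ — admits a polynomial-time maximum-weight independent set computation, either by pinning down the structure of the cross edges or via a direct joint dynamic program over the three cotrees.
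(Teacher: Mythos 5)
Your guessing scheme and your treatment of the non-full cases match the paper: guess $u,v_1,v_2,v_3$ ($O(n^4)$ options), and when some $v_i$ has degree~$1$ in $F$, isolate it by deleting its other neighbours and solve \textsc{Weighted Vertex Cut} on the remaining two terminals via Lemma~\ref{l-easy}. Your observation that each candidate set $C_i$ induces a $P_4$-free graph is also correct. But the full case --- which you yourself identify as ``where the real work lies'' --- is left genuinely unresolved: you reduce it to a maximum-weight independent set problem on an auxiliary graph (the complements of the $C_i$ glued by the cross edges) and offer only the hope that the cross-edge structure can be pinned down or that a joint dynamic program over the three cotrees exists. No such algorithm is given, and nothing in your argument shows this auxiliary problem is tractable; as written, the proof is incomplete precisely at its crux.

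The paper closes this gap with one more round of guessing that makes the coupling disappear. Since $F$ is full, each $D_i$ contains a neighbour $x_i$ of $v_i$; guess the triple $x_1,x_2,x_3$ ($O(n^3)$ options), discard the guess unless $\{x_1,x_2,x_3\}$ is independent, and prune: delete every neighbour of $v_i$ that is adjacent to some $x_j$ with $j\neq i$, and --- using Lemma~\ref{l-full}, which says $D_i$ is a clique containing $x_i$ --- every neighbour of $v_i$ that is \emph{not} adjacent to $x_i$. After this pruning, every surviving vertex outside $\{u,v_1,v_2,v_3,x_1,x_2,x_3\}$ is adjacent to both members of exactly one pair $\{v_i,x_i\}$ and to no $x_j$ with $j\neq i$. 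Now the cross edges you were worried about cannot survive: if $r$ (attached to $v_1,x_1$) and $r'$ (attached to $v_2,x_2$) were adjacent, then $x_1,r,r',x_2$ would be an induced $P_4$, and $u$ and $x_3$ would be two further vertices non-adjacent to it and to each other, giving an induced $2P_1+P_4$. Note that this argument needs the guessed $x_i$'s as the endpoints of the $P_4$: with your $P_4$ of the form $v_1,r,r',v_2$, the vertex $u$ is adjacent to both endpoints and cannot serve as one of the two isolated vertices, which is exactly why your cross edges were ``genuinely not excluded.'' A second, similar $2P_1+P_4$ argument ($x_2,v_2,u,v_3$ plus two non-adjacent vertices $y,y'$ in one part) shows each surviving part is a clique. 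Hence the \emph{entire} pruned graph is itself a full $3$-part solution, and since all weights are positive it is the heaviest one for this guess: no independent-set or clique optimization is needed at all. In short, your proposal is missing the idea of guessing one extra vertex per component to rigidify the instance; without it (or a proof that your auxiliary independent set problem is polynomial-time solvable), the argument does not go through.
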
 

\begin{proof}
Let $|V|=n$.
We consider every possible choice of center~$u$ and center neighbours $v_1,v_2,v_3$. Note that there are $O(n^4)$ options. For each choice of $u,v_1,v_2,v_3$ we do as follows. 
First we check that $u\in T$, $\{v_1,v_2,v_3\}$ is disjoint from $T$ and independent, and that $u$ is adjacent to $v_1$, $v_2$ and $v_3$. If the check fails, the guess is discarded.
We remove every neighbour of $u$ that is not in $\{v_1,v_2,v_3\}$. This gives us a graph $G'$. Note that $G'$ contains the $3$-part solution $F$ of $G$ if $u$ is the center of $F$ and $v_1,v_2,v_3$ are the center neighbours. Say $A$ is the center component of $F$. Then, by Lemma~\ref{l-sep2}, we have that $A-u$ consists of exactly three connected components $D_1$, $D_2$ and $D_3$, say $v_i\in V(D_i)$ for $i\in \{1,2,3\}$.

We first search for a maximum weight $3$-part solution of $G'$ with center~$u$ and center neighbours $v_1,v_2,v_3$ that is not full. This means that at least one of the connected components $D_1,D_2,D_3$ has size~$1$, so at least one of $v_1,v_2,v_3$ is adjacent only to $u$ in $F$. We consider each of the three options. We first examine the case where $v_1$ has degree~$1$ in $F$. In order to do this, we remove $u$ and $v_1$ from $G'$, and we also remove all neighbours of $v_1$ from $G'$. This gives us a graph $G_1$. It remains to solve {\sc Weighted Vertex Cut} for the instance that consists of $G_1$, with $v_2$ and $v_3$ as terminals and the restriction of $w$ to $V(G_1)$.
This takes polynomial time by Lemma~\ref{l-easy}. We repeat the same steps with respect to~$v_2$ and also with respect to $v_3$. From all the non-full $3$-part solutions found we remember one with the largest weight.

We now search for a maximum weight $3$-part solution of $G'$ with center~$u$ and center neighbours $v_1,v_2,v_3$ that is full. 
We remove every vertex not equal to $u$ that is adjacent to at least two vertices from $\{v_1,v_2,v_3\}$ from $G'$. We may do this, as $D_1,D_2,D_3$ are distinct connected components of $A-u$, so such a vertex will not be in any of $D_1,D_2$, or $D_3$. By Lemma~\ref{l-full}, every vertex in $F$ is adjacent to one of $v_1,v_2,v_3$.
Hence, we may also remove every vertex not adjacent to any vertex from $\{v_1,v_2,v_3\}$ from~$G'$. Finally, we remove every vertex of $T\setminus \{u\}$ from $G'$, as $u$ will be the only vertex of $F$ that belongs to $T$. We denote the resulting graph by $G''$; see also Fig.~\ref{f-gdo}. By construction, every vertex of $G''-\{u,v_1,v_2,v_3\}$ is adjacent to exactly one of $\{v_1,v_2,v_3\}$. Moreover, we constructed $G''$ in polynomial time.

\begin{figure}
\begin{center}
\begin{tikzpicture}[scale=1] \draw[color=black,fill=white] (-1,2) ellipse (0.6cm and 0.6cm); \draw[color=black,fill=white] (-1,0) ellipse (0.6cm and 0.6cm); \draw[color=black,fill=white] (-1,-2) ellipse (0.6cm and 0.6cm); \draw[dotted] (-3,2)--(-3,-2) to[out=60,in=300] (-3,2); \draw[dashed] (-3.25,2.7)--(-3.25,1.3)--(-0.15,1.3)--(-0.15,2.7)--(-3.25,2.7) (-3.25,-0.7)--(-3.25,0.7)--(-0.15,0.7)--(-0.15,-0.7)--(-3.25,-0.7) (-3.25,-2.7)--(-3.25,-1.3)--(-0.15,-1.3)--(-0.15,-2.7)--(-3.25,-2.7); \draw
(-5,0)--(-3,0)--(-1.2,0.566) (-3,0)--(-1.2,-0.566) (-5,0)--(-3,2)--(-1.2,2.566) (-3,2)--(-1.2,1.434) (-5,0)--(-3,-2)--(-1.2,-2.566) (-3,-2)--(-1.2,-1.434) (-1,1.3)--(-0.5,0.7) (-1.2,1.3)--(-1.2,0.7) (-1.2,-1.3)--(-0.5,-0.7) (-1,-1.3)--(-0.8,-0.7) (-0.15,-1.5) to[out=60,in=300] (-0.15,2) (-0.15,-2) to[out=60,in=300] (-0.15,2.5); \draw[fill=black] (-5,0) circle [radius=3pt] (-3,0) circle [radius=3pt] (-3,2) circle [radius=3pt] (-3,-2) circle [radius=3pt]; \node[above] at (-3,2.2) {$v_1$}; \node[above right] at (-3,0.2) {$v_2$}; \node[below] at (-3,-2.2) {$v_3$}; \node[below] at (-5.1,-0.25) {$u\in T$};
\end{tikzpicture}
\end{center}
\caption{The graph $G''$ constructed in the proof of Lemma~\ref{l-3part}. Note that $\{v_1,v_2,v_3\}$ is an independent set and that every vertex not in $\{u,v_1,v_2,v_3\}$ is adjacent to exactly one of $v_1,v_2,v_3$. The desired components $D_1$, $D_2$ and $D_3$ of $G''-u$ are, if they exist, subgraphs of the rectangular boxes. However, there might be edges between two vertices from two different boxes and we must deal with this situation.}\label{f-gdo}
\end{figure}
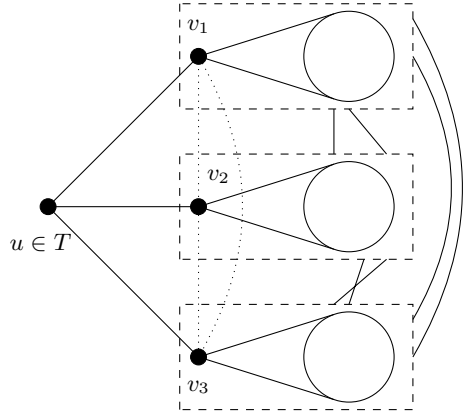

We will now consider all $O(n^3)$ options for choosing neighbours $x_1,x_2,x_3$ of $v_1,v_2,v_3$, respectively; note that these vertices are guessed to be in $D_1-v_1,D_2-v_2,D_3-v_3$, respectively.
As $F$ is full, these vertices $x_1,x_2,x_3$ exist by definition. For each chosen triple $x_1,x_2,x_3$ we do as follows. 
We first check if $\{x_1,x_2,x_3\}$ forms an independent set. This must hold as $D_1,D_2,D_3$ are distinct connected components of $A-u$. So, if the check fails, the guess is discarded.
From $G''$, we remove every neighbour of $v_1$ that is adjacent to at least one of $\{x_2,x_3\}$; every neighbour of $v_2$ that is adjacent to at least one of $\{x_1,x_3\}$; and every neighbour of $v_3$ that is adjacent to at least one of $\{x_1,x_2\}$. By Lemma~\ref{l-full}, we also remove for $i\in \{1,2,3\}$, every neighbour of $v_i$ that is not adjacent to $x_i$. We denote the resulting graph by $F^*$.

By construction, every vertex not in $\{u,v_1,v_2,v_3,x_1,x_2,x_3\}$ is not adjacent to $u$ and is adjacent to both vertices of exactly one of the pairs $\{v_1,x_1\}$, $\{v_2,x_2\}$ or $\{v_3,x_3\}$.
We claim that $F^*$ is the desired full $3$-part solution (for this particular branch).\footnote{For the remainder of the proof, the fact that $G$ is $(2P_1+P_4)$-free is again crucial. If $G$ is $5P_1$-free and thus $(3P_1+P_4)$-free, but not $(2P_1+P_4)$-free, the proof will not work. Namely, in that case, we would obtain the gadget of Papadopoulos and Tzimas~\cite{PT20}
for proving \NP-hardness of {\sc Subset Feedback Vertex Set} for $5P_1$-free graphs (in their proof the center~$u$ and the three center neighbours each have large weight while $T=\{u\}$ and a reduction from {\sc Vertex Cover} for $3$-partite graphs is used).} In order to see this, assume that $F^*-u$ is not the disjoint union of three complete graphs $D_1$, $D_2$, $D_3$. By construction, this means that one of the following two cases must hold:

\begin{figure}
\centering
\begin{minipage}[b]{0.4\textwidth}
\begin{tikzpicture}[scale=0.8] \draw (-2,-1)--(-2,0)--(0,1)--(0,-1) (0,1)--(2,0)--(2,-1) (-2,-2) to[out=60,in=300] (-2,0)
(0,-2) to[out=120,in=240] (0,0); \draw[very thick] (-2,-1)--(-2,-2)--(0,-2)--(0,-1); \draw[fill=white] (0,0) circle [radius=3pt] (-2,0) circle [radius=3pt] (2,0) circle [radius=3pt]; \draw[fill=black]
(0,1) circle [radius=3pt] (-2,-1) circle [radius=3pt] (0,-1) circle [radius=3pt] (2,-1) circle [radius=3pt] (-2,-2) circle [radius=3pt] (0,-2) circle [radius=3pt]; \node[above] at (0,1) {$u\in T$}; \node[left] at (-2,0) {$v_1$}; \node[right] at (0,0) {$v_2$}; \node[right] at (2,0) {$v_3$}; \node[left] at (-2,-1) {$x_1$}; \node[right] at (0,-1) {$x_2$}; \node[right] at (2,-1) {$x_3$}; \node[left] at (-2,-2) {$r$}; \node[right] at (0,-2) {$r'$};
\end{tikzpicture}
\end{minipage}
\qquad
\begin{minipage}[b]{0.2\textwidth}
\begin{tikzpicture}[scale=0.8] \draw (-2,-1)--(-2,0)--(0,1) (-3,-2.2)--(-3,0.2)--(-1,0.2)--(-1,-2.2)--(-3,-2.2) (2,0)--(2,-1) (-2.5,-2) to[out=120,in=220] (-2,0) (-1.5,-2) to[out=60,in=320] (-2,0); \draw[dotted] (-2.5,-2)--(-1.5,-2); \draw[very thick] (0,-1)--(0,1)--(2,0); \draw[dashed] (-2.5,-2)--(-2,-1)--(-1.5,-2); \draw[fill=white] (-2,0) circle [radius=3pt] (-2,-1) circle [radius=3pt] (2,-1) circle [radius=3pt]; \draw[fill=black] (2,0) circle [radius=3pt] (0,1) circle [radius=3pt] (0,0) circle [radius=3pt] (0,-1) circle [radius=3pt] (-2.5,-2) circle [radius=3pt] (-1.5,-2) circle [radius=3pt]; \node[above] at (0,1) {$u\in T$}; \node[left] at (-2,0) {$v_1$}; \node[right] at (0,0) {$v_2$}; \node[right] at (2,0) {$v_3$}; \node[left] at (-2,-1) {$x_1$}; \node[right] at (0,-1) {$x_2$}; \node[right] at (2,-1) {$x_3$}; \node[left] at (-2.5,-2) {$y$}; \node[right] at (-1.5,-2) {$y'$};\node[above] at (-2,0.3) {$$};
\end{tikzpicture}
\end{minipage}
\caption{The contradictions obtained in Lemma~\ref{l-3part}. On the left we have Case~(i) and on the right, Case (ii): in both cases we obtain an induced $2P_1+P_4$ (highlighted by the black vertices and thick edges).}\label{f-nn}
\end{figure}
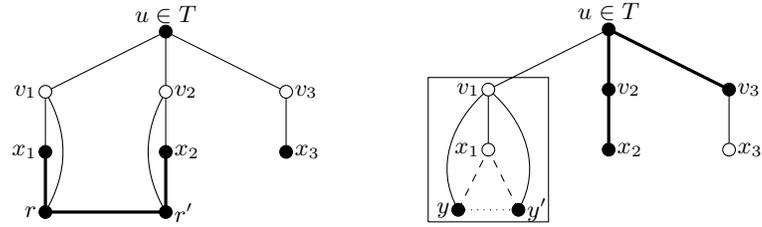

\begin{itemize}
\item [(i)] $F^*-u$ has two adjacent vertices $r$ and $r'$ that have a different neighbour in $\{v_1,v_2,v_3\}$, or
\item [(ii)] $F^*-u$ has a connected component that is not a complete graph.
\end{itemize}
For a contradiction, assume that Case~(i) holds. Say, $r$ is adjacent to $v_1$ and $r'$ is adjacent to $v_2$.
By construction of $F^*$, we find that $r$ is adjacent to $x_1$ but not to $x_2$ or $x_3$, whereas $r'$ is adjacent to $x_2$ but not to $x_1$ or $x_3$. 
Recall also that $\{x_1,x_2,x_3\}$ is an independent set and that $u$ is not adjacent to any vertex of $\{r,r',x_1,x_2,x_3\}$.
However, now the vertices $x_1,r,r',x_2,u,x_3$ induce a $2P_1+P_4$ in $F^*$, a contradiction as $F^*$ being an induced subgraph of $G$ is $(2P_1+P_4)$-free (see also Fig.~\ref{f-nn}).
Hence, Case (i) does not hold, and thus $F^*-u$ consists of three different connected components that contain the sets $\{v_1,x_1\}$, $\{v_2,x_2\}$ and $\{v_3,x_3\}$, respectively.

We conclude that Case~(ii) must hold if $F^*$ is not a full $3$-part solution.
We prove that this leads to another contradiction. Let $y,y'$ be two non-adjacent vertices in say the connected component of $F^*-u$ that contains $v_1$. As $v_1$ is adjacent to every vertex of that component, we find that $v_1$ is not equal to $y$ or $y'$. However, now $x_2,v_2,u,v_3,y,y'$ induce a $2P_1+P_4$, a contradiction; see also Fig.~\ref{f-nn}. 
Hence, we conclude that $F^*$ is indeed the desired full $3$-part solution. 

From all the full $3$-part solutions we found we remember one with the largest weight.
It remains to compare the maximum weight non-full solution with the maximum weight full solution and pick one with the largest weight.
The correctness of the algorithm follows from the arguments above. As the total number of branches is polynomial and processing each of them takes polynomial time as well, the running time of the algorithm is polynomial. \qed
\end{proof}

\subsection{Mim-Width}\label{s-mim}

We also need some known results that involve the mim-width of a graph. This width parameter was introduced by Vatshelle~\cite{Va12}. For the definition of mim-width we refer to~\cite{Va12}, as we do not need it here. A graph class~${\cal G}$ has {\it bounded} mim-width if there exists a constant $c$ such that every graph in ${\cal G}$ has mim-width at most~$c$. The mim-width of a graph class ${\cal G}$ is {\it quickly computable} if it is possible to compute in polynomial time a so-called branch decomposition for a graph~$G\in {\cal G}$ whose mim-width is bounded by some function in the mim-width of $G$. We can now state the aforementioned result of Bergougnoux, Papadopoulos and Telle in a more detailed way.

\begin{theorem}[\cite{BPT19}]\label{t-bpt19}
{\sc Weighted Subset Feedback Vertex Set} is polynomial-time solvable for every graph class whose mim-width is bounded and quickly computable.
\end{theorem}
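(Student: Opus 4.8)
The statement is a result of Bergougnoux, Papadopoulos and Telle, so I only sketch the route I would take to reprove it. Throughout, fix a graph class $\mathcal{G}$ of mim-width at most a constant $c$ for which a branch decomposition of mim-width $O(c)$ can be computed in polynomial time, and recall from Section~\ref{s-pre} that solving \textsc{Weighted Subset Feedback Vertex Set} on $(G,w,T)$ is the same as computing a maximum-weight $T$-forest $F=G-S$. The plan is a bottom-up dynamic programming over the decomposition tree: each edge of the tree splits $V$ into a cut $(A,\overline{A})$, and for the node whose subtree yields $A$ I would maintain a table of \emph{partial solutions}, i.e. candidate sets $X=A\cap V(F)$ together with just enough interface information to (i) be combinable with a partial solution on $\overline{A}$ and (ii) guarantee that the combined graph is a $T$-forest.

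First I would record the structural characterization on which the whole DP rests: $F$ is a $T$-forest if and only if no block (maximal two-connected subgraph containing a cycle) of $F$ meets $T$; equivalently, contracting every connected component of $F-T$ into a single vertex turns $F$ into a forest. This reduces ``no $T$-cycle'' to an acyclicity condition on a contracted graph, and it is this condition that the DP must certify incrementally. Consequently the interface information attached to a partial solution $X$ must capture, at the boundary of the cut, (a) how the connected components of $F[X]$ reach across $(A,\overline{A})$, and (b) for each such component whether it already contains a vertex of $T$, since two $T$-meeting components may never be merged, nor may a single component be re-merged to itself.

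The engine that keeps the tables small is the $d$-neighbour equivalence relation $\equiv^{d}_{A}$ on subsets of $A$: two subsets are equivalent when every vertex of $\overline{A}$ has the same number of neighbours in each, counted up to a threshold $d$. For a constant $d$ the number of classes of $\equiv^{d}_{A}$ across a cut of mim-width $c$ is $n^{O(dc)}$, which is polynomial, and one representative per class can be listed in polynomial time. I would index the DP table of a node by such a class, refined by the component/$T$-contact data of (a)--(b), which after encoding the boundary structure as a partition of a bounded set of representative vertices (rather than of all of $A$) again takes only $n^{O(c)}$ values. For each index I keep the maximum weight over all partial solutions realising it. Leaves are initialised directly; an internal node is processed by iterating over pairs of child indices, reconstructing the induced edges across the two sides from the neighbourhood profiles, updating the component partition and $T$-flags, and discarding any combination that merges two $T$-meeting components or closes a $T$-cycle. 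The optimum is read off at the root, with Lemma~\ref{l-test} used implicitly to sanity-check feasibility.

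The main obstacle is the \emph{representation} step: one must prove that keeping a single maximum-weight representative per index loses no optimal solution. This requires an exchange argument showing that if $X\equiv^{d}_{A}X'$ with matching component/$T$-structure, then any completion $Y\subseteq\overline{A}$ for which $X\cup Y$ is an optimal $T$-forest also makes $X'\cup Y$ a $T$-forest of at least the same weight, so the lighter of $X,X'$ may safely be forgotten. Verifying that the chosen threshold $d$ and the partition-based encoding are rich enough to make this substitution preserve both acyclicity-through-$T$ and the connectivity bookkeeping is the technical heart of the argument; once it is in place, the polynomial bound $n^{O(c)}$ on the number of indices, together with polynomial-time transitions, yields the claimed running time.
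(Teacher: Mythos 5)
The paper does not prove Theorem~\ref{t-bpt19} at all: it is imported verbatim from~\cite{BPT19} and used as a black box (via Lemmas~\ref{l-p4}, \ref{l-au} and~\ref{l-poss}). So there is no in-paper proof to compare against, and your sketch must be judged against the cited work itself. At the level of architecture you reconstruct it correctly: the reformulation as finding a maximum-weight $T$-forest, the characterization that $F$ is a $T$-forest if and only if contracting every connected component of $F-T$ yields a (multigraph) forest, and a bottom-up dynamic programming over a branch decomposition whose tables are controlled by the $d$-neighbour equivalence $\equiv^{d}_{A}$ of Bui-Xuan, Telle and Vatshelle. The contraction characterization is sound (your parenthetical definition of a block is slightly off --- blocks are maximal $2$-connected subgraphs \emph{or} bridges --- but the intended condition, that no block containing a cycle meets $T$, is the right one).

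The genuine gap is the representation step, and it is not merely ``the technical heart left to the reader'': as you have set it up, it fails. You index table entries by a $d$-neighbour class of $X$ refined by ``a partition of a bounded set of representative vertices'' with $T$-flags, and keep one maximum-weight partial solution per index. But the set of representatives of $\equiv^{d}_{A}$ across a cut of mim-width $c$ has size $n^{\Theta(dc)}$ --- polynomial, not bounded --- so partitions of it are super-polynomially many; and no bounded subset of representatives can carry the connectivity data, because the components of $F[X]$ crossing the cut can be arbitrarily numerous and are \emph{not} interchangeable with respect to which completions merge two $T$-meeting components or close a $T$-cycle. This tension is precisely why acyclicity and connectivity constraints resisted the classical $d$-neighbour-equivalence framework for years. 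The actual proof in~\cite{BPT19} (building on the Bergougnoux--Kant\'e representative-set machinery) does \emph{not} keep one optimal solution per canonical index; instead, each table cell is a \emph{set} of partial solutions, and a ``reduce'' lemma shows this set can be pruned to polynomial size (yielding overall running time $n^{O(w^2)}$ for mim-width $w$) while guaranteeing that, for every possible completion on the other side of the cut, some surviving partial solution extends to a $T$-forest of maximum weight. Your exchange argument is the right \emph{goal}, but it must be proved for this set-based, non-canonical representation; with indices as you defined them, the table is already too large before any exchange argument can be invoked.
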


\noindent
Belmonte and Vatshelle~\cite{BV13} proved that the mim-width of the class of permutation graphs is bounded and quickly computable. As $P_4$-free graphs form a subclass of the class of permutation graphs, we immediately obtain the following lemma.\footnote{It is well-known that $P_4$-free graphs have clique-width at most~$2$, and instead of Theorem~\ref{t-bpt19} we could have used a corresponding result for clique-width. We chose to formulate Theorem~\ref{t-bpt19} in terms of mim-width, as mim-width is a more powerful parameter than clique-width~\cite{Va12} and thus bounded for more graph classes.}

\begin{lemma}\label{l-p4}
The mim-width of the class of $P_4$-free graphs is bounded and quickly computable.
\end{lemma}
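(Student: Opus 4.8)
The plan is to derive the statement directly from the two ingredients already cited just above it: the result of Belmonte and Vatshelle~\cite{BV13} that the class of permutation graphs has bounded and quickly computable mim-width, together with the classical fact that every $P_4$-free graph is a permutation graph. So the whole proof reduces to establishing the class inclusion and then observing that both properties in the definition descend to any subclass.

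First I would recall that $P_4$-free graphs are exactly the cographs, i.e.\ the graphs obtainable from single vertices by repeatedly taking disjoint unions and joins. It is a classical fact that cographs form a subclass of permutation graphs; if a self-contained argument is wanted, one verifies it by induction on the cotree construction, maintaining a permutation diagram of the graph built so far and showing that both the disjoint-union and the join operation can be realized by appropriately concatenating or interleaving the two diagrams. Either way, every $P_4$-free graph~$G$ is in particular a permutation graph.

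Second, I would invoke~\cite{BV13}: there is a constant~$c$ such that every permutation graph has mim-width at most~$c$, and a branch decomposition whose mim-width is bounded by a function of the mim-width of the input graph can be computed in polynomial time on permutation graphs. The \emph{boundedness} half then transfers with no work at all: since $G$ is a permutation graph, its mim-width is at most~$c$, so the constant~$c$ witnesses boundedness for the subclass of $P_4$-free graphs. For \emph{quick computability} I would simply run the permutation-graph decomposition procedure of~\cite{BV13} on the $P_4$-free input~$G$; as $G$ lies in the class on which that procedure is guaranteed to work, it returns a suitable decomposition in polynomial time.

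The only point requiring care—and the step I would flag as the mild obstacle—is the quick-computability condition, namely whether the Belmonte–Vatshelle procedure presupposes a permutation model of~$G$ as part of its input rather than computing one itself. This is easily resolved: permutation graphs can be recognized, and a permutation model constructed, in polynomial time, so I would prepend such a recognition step to obtain the model and then feed it to the decomposition algorithm. With that handled, both properties in the definition of ``bounded and quickly computable mim-width'' hold for $P_4$-free graphs, which is exactly the claim. The boundedness half is entirely routine once the class inclusion is in place, so the inclusion and the provenance of the permutation model are the only things that genuinely need to be stated.
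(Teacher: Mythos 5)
Your proposal matches the paper's own argument exactly: the paper derives the lemma immediately from the Belmonte--Vatshelle result~\cite{BV13} that permutation graphs have bounded and quickly computable mim-width, together with the fact that $P_4$-free graphs (cographs) form a subclass of permutation graphs. Your additional care about obtaining a permutation model via polynomial-time recognition is a reasonable elaboration, but it does not change the route--this is the same proof.
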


\noindent
For a graph class ${\cal G}$ and an integer $p\geq 0$, we let ${\cal G}+pv$ be the graph class that consists of all graphs that can be modified into a graph from ${\cal G}$ by deleting at most $p$ vertices. The following lemma follows in a straightforward way from a result of Vatshelle~\cite{Va12}.

\begin{lemma}\label{l-au}
If ${\cal G}$ is a graph class whose mim-width is bounded and quickly computable, then the same holds for the class ${\cal G}+pv$, for every constant $p\geq 0$.
\end{lemma}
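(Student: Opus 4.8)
The plan is to use the basic robustness of mim-width under single-vertex changes, which is the result of Vatshelle~\cite{Va12} alluded to: for any graph $H$ and any vertex $v$ we have $\mathrm{mimw}(H)\le \mathrm{mimw}(H-v)+1$, and moreover this bound is algorithmic. Given a branch decomposition $(T,\delta)$ of $H-v$ of mim-width $k$, one obtains a branch decomposition of $H$ of mim-width at most $k+1$ in polynomial time by the following surgery: subdivide an arbitrary edge of $T$ with a new node $x$ and attach the leaf $\delta^{-1}(v)=v$ to $x$. The single new edge $xv$ induces the cut $(\{v\},V(H)\setminus\{v\})$, whose maximum induced matching has size at most $1$; every other cut of the new tree is an old cut of $(T,\delta)$ with $v$ placed on one side, and adding one vertex to one side of a cut can enlarge a maximum induced matching by at most one edge (a matching uses at most one edge incident to $v$, and deleting that edge leaves an induced matching of the old cut). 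Hence every new cut has mim-value at most $k+1$.

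First I would settle the \emph{bounded} half. Let $c$ bound the mim-width of $\mathcal{G}$ and let $G\in\mathcal{G}+pv$, so there is $P\subseteq V(G)$ with $|P|\le p$ and $G-P\in\mathcal{G}$, whence $\mathrm{mimw}(G-P)\le c$. Reinserting the at most $p$ vertices of $P$ one at a time and applying the single-vertex bound each time gives $\mathrm{mimw}(G)\le \mathrm{mimw}(G-P)+p\le c+p$. Since $c$ and $p$ are constants, $\mathcal{G}+pv$ has bounded mim-width.

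For the \emph{quickly computable} half, let $f$ be the (without loss of generality monotone) function witnessing quick computability for $\mathcal{G}$. On input $G\in\mathcal{G}+pv$ I would enumerate all $O(n^{p})$ sets $P\subseteq V(G)$ with $|P|\le p$. For each $P$, run the assumed polynomial-time algorithm for $\mathcal{G}$ on $G-P$ under its guaranteed time bound, discarding the guess if it does not return a valid branch decomposition $D_P$ of $G-P$ within that bound; otherwise extend $D_P$ to a decomposition $D_P'$ of $G$ by reinserting the vertices of $P$ via the surgery above. For the choice $P^{\ast}$ with $G-P^{\ast}\in\mathcal{G}$ the algorithm returns $D_{P^{\ast}}$ with $\mathrm{mimw}(D_{P^{\ast}})\le f(\mathrm{mimw}(G-P^{\ast}))\le f(c)$, so $\mathrm{mimw}(D_{P^{\ast}}')\le f(c)+p$. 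As $p$ is constant, all of this runs in polynomial time.

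The one genuine obstacle is \emph{selection}: we cannot recognise membership in $\mathcal{G}$, so we do not know which guess is $P^{\ast}$, and a wrong guess may yield a $D_P'$ of large mim-width. I would resolve this by noting that it suffices to output a single decomposition of constant mim-width. Put $B:=f(c)+p$, a constant depending only on $\mathcal{G}$ and $p$, and recall $\mathrm{mimw}(D_{P^{\ast}}')\le B$. For each candidate $D_P'$ and each of its $O(n)$ cuts, one can decide whether the maximum induced matching across the cut exceeds $B$ by brute force over all $\binom{m}{B+1}$ sets of $B+1$ crossing edges; this is polynomial because $B$ is constant, so testing $\mathrm{mimw}(D_P')\le B$ is polynomial. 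We output any candidate passing the test: at least $D_{P^{\ast}}'$ passes, and every output decomposition has mim-width at most the constant $B$, hence at most a (constant) function of $\mathrm{mimw}(G)$. This establishes both halves, so $\mathcal{G}+pv$ has bounded and quickly computable mim-width.
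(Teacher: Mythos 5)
Your proof is correct, and it is in fact more complete than the paper's, which offers no argument at all beyond the sentence that the lemma ``follows in a straightforward way from a result of Vatshelle~\cite{Va12}''. The Vatshelle result in question is precisely your leaf-insertion surgery: reinserting a vertex $v$ into a branch decomposition of $H-v$ raises the mim-width by at most $1$, because every cut of the new tree is either $(\{v\},V(H)\setminus\{v\})$ or an old cut with $v$ added to one side, and an induced matching can use at most one edge incident to $v$. Iterating this $p$ times gives both the combinatorial bound $c+p$ and the constructive extension, which is all the paper relies on. What you add --- and what the paper's one-line justification genuinely glosses over --- is the selection problem: since membership in ${\cal G}$ need not be decidable in polynomial time, an algorithm for ${\cal G}+pv$ cannot identify the deletion set $P^{\ast}$, and the assumed algorithm may misbehave on graphs $G-P\notin{\cal G}$. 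Your remedy (clock the algorithm, enumerate all $O(n^{p})$ guesses, extend each returned decomposition, and certify a candidate by brute-force testing each of its $O(n)$ cuts against the constant bound $B=f(c)+p$) is sound, and it rests on the correct observation that for a class of bounded mim-width, quick computability only requires outputting \emph{some} decomposition of constant mim-width. The paper never runs into this issue in its applications (e.g.\ in Lemma~\ref{l-poss} the deleted set $U\cup Z$ is constructed explicitly, so nothing needs to be guessed), but for the lemma as stated for an arbitrary class ${\cal G}$, your extra verification step is genuinely needed.
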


\noindent
We will now explain the relevance of the above results in a slightly more general way (such that we can apply the next lemma in Section~\ref{s-2p1p4u} as well).
Let $G=(V,E)$ be an $(sP_1+P_4)$-free graph for some $s\geq 2$ and let $T\subseteq V$. Let $F$ be a $T$-forest of $G$. We define the {\it core} of $F$ as the set of vertices of $F$ that have at most $2s-1$ neighbours in $F$.
We say that $F$ is {\it core-complete} if the core of $F$ has no independent set of size at least~$s$; otherwise $F$ is
{\it core-incomplete}.\footnote{These notions are not meaningful if $s\in \{0,1\}$. Hence, we defined them for $s\geq 2$.}
We prove the following algorithmic result. 

\begin{lemma}\label{l-poss}
Let $s\geq 2$.
For an $(sP_1+P_4)$-free graph $G=(V,E)$ with a positive vertex weighting~$w$ and a set $T\subseteq V$, it is possible to find a core-incomplete solution of maximum weight in polynomial time.
\end{lemma}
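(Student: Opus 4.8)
The plan is to combine exhaustive guessing of a small certificate with the bounded mim-width machinery of Theorem~\ref{t-bpt19}. Suppose $F$ is a core-incomplete solution. By definition the core of $F$ contains an independent set $\{a_1,\dots,a_s\}$ of size $s$, and each $a_i$, being in the core, has at most $2s-1$ neighbours in $F$. The idea is to \emph{guess} these $s$ vertices together with their neighbourhoods $N_i:=N_F(a_i)$ inside $F$. Since $|N_i|\le 2s-1$ and $s$ is a fixed constant, the number of guesses is $O(n^{s}\cdot n^{s(2s-1)})=O(n^{2s^2})$, which is polynomial. For each guess I would first discard it unless it is consistent, that is, unless $\{a_1,\dots,a_s\}$ is independent in $G$ and $N_i\subseteq N(a_i)$ with $|N_i|\le 2s-1$ for every $i$.

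For a surviving guess I would form the graph $G'$ by deleting from $G$ every vertex of $N(a_i)\setminus N_i$, over all $i$; call the set of retained guessed vertices $K:=\{a_1,\dots,a_s\}\cup\bigcup_i N_i$, so that $|K|\le 2s^2$. The crucial structural observation is that every vertex $v\in V(G')\setminus K$ is non-adjacent in $G$ to each $a_i$: indeed $v$ survived the deletion, so whenever $v\in N(a_i)$ we would have $v\in N_i\subseteq K$, contradicting $v\notin K$. Consequently $\{a_1,\dots,a_s\}$ is an independent set with no edges to $V(G')\setminus K$, and therefore $G[V(G')\setminus K]$ must be $P_4$-free, since an induced $P_4$ in it would together with $a_1,\dots,a_s$ form an induced $sP_1+P_4$ in $G$. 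Hence $G'$ is obtained from a $P_4$-free graph by adding at most $2s^2$ vertices, so $G'$ belongs to the class ${\cal G}+pv$, where ${\cal G}$ is the class of $P_4$-free graphs and $p=2s^2$. By Lemmas~\ref{l-p4} and~\ref{l-au} this class has bounded and quickly computable mim-width, so Theorem~\ref{t-bpt19} applies to $G'$.

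It remains to ensure that the solution returned for $G'$ actually realises the guessed certificate, namely that it contains all of $a_1,\dots,a_s$. I would force this by raising the weight of each $a_i$ to some value $W>w(V)$ before invoking Theorem~\ref{t-bpt19}; since $\{a_1,\dots,a_s\}$ is independent it is itself a $T$-forest of $G'$, so all $a_i$ can be included simultaneously, and the boosted weights force any maximum-weight $T$-forest of $G'$ to contain all of them while still maximising the original weight of the remainder. In the resulting $T$-forest $F'$ each $a_i$ has at most $|N_i|\le 2s-1$ neighbours, because all of its other neighbours were deleted, so $\{a_1,\dots,a_s\}$ is an independent set inside the core of $F'$; thus every solution produced is genuinely core-incomplete. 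Conversely, applying the guess that matches $a_i$ and $N_i=N_F(a_i)$ of an optimal core-incomplete solution $F$ shows that $F$ is feasible for the corresponding call, so the best solution found over all guesses has weight at least $w(F)$. Returning the heaviest solution over all $O(n^{2s^2})$ guesses therefore yields a maximum-weight core-incomplete solution in polynomial time. The step requiring the most care is the structural claim that the non-guessed part is $P_4$-free, which is precisely where $(sP_1+P_4)$-freeness is used, together with the weight trick that guarantees the recovered forest respects the guessed certificate and is hence core-incomplete.
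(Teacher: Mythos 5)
Your proposal is correct and follows essentially the same approach as the paper: guess the independent set of $s$ core vertices together with their at most $2s-1$ neighbours each in the solution, delete all their remaining neighbours, observe that the non-guessed part of the resulting graph is $P_4$-free by $(sP_1+P_4)$-freeness, and then combine Lemmas~\ref{l-p4} and~\ref{l-au} with Theorem~\ref{t-bpt19} over the $O(n^{2s^2})$ guesses. Your extra weight-boosting step, which forces the guessed vertices into the returned forest so that it is literally core-incomplete, does not appear in the paper's proof (which simply returns the heaviest $T$-forest found, its weight dominating the optimum core-incomplete weight, which suffices for how the lemma is used later); your addition is sound and, if anything, matches the lemma's literal statement more faithfully.
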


\begin{proof}
Let $|V|=n$.
A core-incomplete solution $F$ has a core that contains an independent set $U=\{u_1,\ldots,u_s\}$ of size~$s$. By definition, core vertices have degree at most~$2s-1$ in $F$. We consider all $O(n^s)$ options of choosing the vertices $u_1,\ldots,u_s$. 
We check if $U$ is an independent set. If the check fails, the guess is discarded.
For each choice, we consider all $O(n^{s(2s-1)})$ options of choosing a set~$Z$ of size at most $s(2s-1)$ that contains the (at most $2s-1$) neighbours of each $u_i$
$(1\leq i \leq s)$ in $F$. For each choice of  $Z$ we now do as follows. We delete every neighbour of every $u_i$ that is not in $Z$ from $G$, that is, we delete the set $N(U)\setminus Z$. Let $G'$ be the new graph; see also Fig.~\ref{f-poss}. Note that $N_{G'}(U)=Z$.
Let $w'$ be the restriction of $w$ to $V(G')$ and let $T'=T\cap V(G')$.

As $G$ is $(sP_1+P_4)$-free, $G'$ is also $(sP_1+P_4)$-free.
As $U$ is an independent set of size~$s$ and $N_{G'}(U)=Z$, this means that $G'-(U\cup Z)$ is $P_4$-free. Let ${\cal G}$ be the class of $P_4$-free graphs. Then, as $|U\cup Z|\leq s(2s-1)+s=2s^2$, we have that $G'$ belong to ${\cal G}+2s^2v$.
As the mim-width of the class ${\cal G}$ of $P_4$-free graphs is bounded and quickly computable by Lemma~\ref{l-p4}, we can apply Lemma~\ref{l-au} and then we can use Theorem~\ref{t-bpt19} on $(G',w',T')$. Since the number of branches is $O(n^{2s^2})$, the running time of our algorithm is polynomial.
\qed
\end{proof}

\begin{figure}
\begin{center}
\begin{tikzpicture}[scale=1] \draw[color=black,fill=white] (1,0) ellipse (1cm and 2.5cm); \draw (-2.3,2.3)--(-2.6,2.3)--(-2.6,-2.3)--(-2.3,-2.3) (2.3,2.8)--(2.6,2.8)--(2.6,-2.8)--(2.3,-2.8) (3,1.7)--(3,-1.7)--(5,-1.7)--(5,1.7)--(3,1.7) (-2,2)--(0.4,2)
(-2,1.5)--(0.2,1.5) (-2,1)--(0.08,1) (-2,0.5)--(0.028,0.5) (-2,0)--(0,0) (-2,-2)--(0.4,-2) (-2,-1.5)--(0.2,-1.5) (-2,-1)--(0.08,-1) (-2,-0.5)--(0.028,-0.5); \draw[fill=black] (-2,2) circle [radius=2pt] (-2,1.5) circle [radius=2pt] (-2,-2) circle [radius=2pt] (-2,1) circle [radius=1.5pt] (-2,0.5) circle [radius=1pt] (-2,0) circle [radius=1pt] (-2,-0.5) circle [radius=1pt] (-2,-1) circle [radius=1pt](-2,-1.5) circle [radius=1pt]; \node[left] at (-2.7,0) {$s$}; \node[above] at (-2, 2.4) {$U$}; \node[above] at (1,2.6) {$Z$}; \node[right] at (2.7,-2.8) {$\leq s(2s-1)$}; \node[above] at (4,1.8) {$\;\;\;\;\;V(G')\setminus (U\cup Z)$}; 
\end{tikzpicture}
\end{center}
\caption{The graph $G'$ from the proof of Lemma~\ref{l-poss}.}\label{f-poss}
\end{figure}
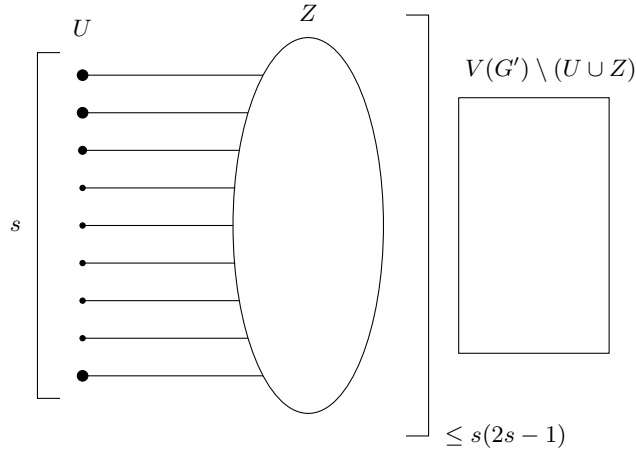

\subsection{The Algorithm}\label{s-algo}

In this section we present our algorithm for {\sc Weighted Subset Feedback Vertex Set} restricted to $(2P_1+P_4)$-free graphs.  We first need to prove one more structural lemma for core-complete solutions. We prove this lemma for any value $s\geq 2$, such that we can use this lemma in the next section as well. However, for $s=2$ we have a more accurate upper bound on the size of the core.

\begin{lemma}\label{l-two}
For some $s\geq 2$, let $G=(V,E)$ be an $(sP_1+P_4)$-free graph. Let $T\subseteq V$. Let $F$ be a core-complete $T$-forest of $G$ such that $T\cap V(F)\neq \emptyset$. Then the core of $F$ 
contains every vertex of $T\cap V(F)$, and $T\cap V(F)$ has size at most $2s-2$. If $s=2$, the core of $F$ is a clique of size at most~$2$ (in this case $T\cap V(F)$ has size at most~$2$ as well).
\end{lemma}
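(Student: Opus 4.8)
The plan is to prove the three assertions in turn, each time using two facts: a vertex of $T$ on a cycle of $F$ would produce a forbidden $T$-cycle, and core-completeness forbids an independent set of size $s$ among the low-degree (core) vertices. First I would fix $t\in T\cap V(F)$ and let $A$ be the component of $F$ containing $t$. Since $F$ is a $T$-forest, $t$ lies on no cycle of $F$; exactly as in Lemmas~\ref{l-sep} and~\ref{l-sep2}, this forces $t$ to have precisely one neighbour in each component of $A-t$, so $A-t$ splits into exactly $d:=\deg_F(t)$ components $D_1,\dots,D_d$, with $v_i\in D_i$ the unique neighbour of $t$ in $D_i$. Consequently $\{v_1,\dots,v_d\}$ is independent, as vertices in different components of $A-t$ are non-adjacent. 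To show $t$ lies in the core I would assume $d\geq 2s$ and derive a contradiction, calling $v_i$ a \emph{leaf} if $D_i=\{v_i\}$ (so $\deg_F(v_i)=1$) and \emph{extendable} otherwise, fixing in the latter case a neighbour $x_i\in D_i$ of $v_i$.

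The key step, and the place where $(sP_1+P_4)$-freeness enters, is a dichotomy on the number of extendable neighbours; I expect the main difficulty here to be the careful bookkeeping of how many neighbours of $t$ each gadget consumes, since a naive count is off by one. If at least $s+1$ of the $v_i$ are extendable, I would take $P_4=x_a\,v_a\,t\,v_b$ for one extendable $v_a$ and one further neighbour $v_b$, and then take the vertices $x_{c_1},\dots,x_{c_s}$ of $s$ further extendable neighbours $v_{c_1},\dots,v_{c_s}$ distinct from $v_a$ and $v_b$; such a $v_b$ exists because $d\geq 2s\geq s+2$ for $s\geq 2$. Since each $x_{c_j}$ lies in a component disjoint from those meeting the $P_4$, and $t$ has no neighbour of $D_{c_j}$ other than $v_{c_j}$, the vertices $x_{c_1},\dots,x_{c_s}$ are pairwise non-adjacent and non-adjacent to every vertex of the $P_4$, giving an induced $sP_1+P_4$ and hence a contradiction. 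Otherwise at most $s$ of the $v_i$ are extendable, so at least $d-s\geq s$ of them are leaves; being of degree $1$ they lie in the core, and being in distinct components they are pairwise non-adjacent, yielding an independent set of size $s$ in the core and contradicting core-completeness. Either way $d\leq 2s-1$, so every vertex of $T\cap V(F)$ lies in the core.

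For the bound $|T\cap V(F)|\leq 2s-2$, I would note that $F[T\cap V(F)]$ contains no cycle (such a cycle would be a $T$-cycle), hence is a forest and in particular bipartite; its larger colour class is an independent set of size at least $\lceil |T\cap V(F)|/2\rceil$, and by the first claim this set lies in the core. Core-completeness bounds its size by $s-1$, so $\lceil |T\cap V(F)|/2\rceil\leq s-1$, i.e. $|T\cap V(F)|\leq 2s-2$.

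Finally, for $s=2$ core-completeness states exactly that the core contains no independent set of size $2$, i.e. the core is a clique. Since $T\cap V(F)\neq\emptyset$ is contained in the core, if the core had at least three vertices then a vertex $t\in T\cap V(F)$ together with any two other core vertices would form a triangle through $t$, which is a $T$-cycle, contradicting that $F$ is a $T$-forest. Hence the core is a clique on at most two vertices, and $T\cap V(F)$, being contained in it, also has size at most $2$.
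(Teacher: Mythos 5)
Your proof is correct and takes essentially the same route as the paper: both arguments decompose the component of a supposedly high-degree vertex $t\in T\cap V(F)$ into the components $D_1,\dots,D_d$ of $A-t$, use core-completeness on an independent subset of the neighbours $v_i$ to guarantee at least $s+1$ components of depth at least two, and then exhibit the identical induced $sP_1+P_4$ (a path $x_a\,v_a\,t\,v_b$ plus $s$ deep vertices from other components), followed by the same bipartiteness bound and the same clique/triangle argument for $s=2$. The only cosmetic difference is that you bound the number of leaves among the $v_i$ (degree-$1$ core vertices) rather than the number of core vertices among them, which is the same counting in contrapositive form.
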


\begin{proof}
Consider a vertex $u\in T\cap V(F)$.
For a contradiction, assume that $u$ does not belong to the core of $F$. Then $u$ has at least $2s$ neighbours in $F$. Let $V_u=\{v_1,\ldots,v_p\}$ for some $p\geq 2s$ be the set of neighbours of $u$ in $F$.

Let $A$ be the connected component of $F$ that contains $u$. As $F$ is a $T$-forest, $A-u$ consists of $p$ connected components $D_1,\ldots, D_p$ such that $v_i\in V(D_i)$ for $i\in \{1,\ldots,p\}$. In particular, this implies that $V_u=\{v_1,\ldots,v_p\}$ must be an independent set. As the core of $F$ has no independent set of size~$s$, this means that at most $s-1$ vertices of $V_u$ may belong to the core of $F$. Recall that $p\geq 2s$.
Hence, 
we may assume without loss of generality that $v_1,\ldots,v_{s+1}$ do {\it not} belong to the core of $F$. This means that $v_1,\ldots, v_{s+1}$ each have degree at least~$2s$ in $A$. Hence, for $i\in \{1,\ldots,s+1\}$, vertex $v_i$ is adjacent to some vertex $w_i$ in $D_i$. As $s\geq 2$, we have that $2s>s+1$ and hence, vertex $v_{s+2}$ exists.
However, now the vertices $w_1,v_1,u,v_{s+2},w_2,w_3,\ldots,w_{s+1}$ induce an $sP_1+P_4$, a contradiction (see also Fig.~\ref{f-two}).

From the above, we conclude that every vertex of $T\cap V(F)$ belongs to the core of $F$.
As $F$ is a $T$-forest, $T\cap V(F)$ induces a forest, and thus a bipartite graph. As $F$ is core-complete, every independent set in the core has size at most $s-1$. Hence, $T\cap V(F)$ has size at most $2(s-1)=2s-2$.

Now suppose that $s=2$. As $F$ is core-complete, the core of $F$ must be a clique. As the core of $F$ contains $T\cap V(F)$ and $T\cap V(F)$ induces a forest, this means that the core of $F$, and thus also $T\cap V(F)$, has  size at most~$2$. This completes the proof of the lemma.
\qed
\end{proof}

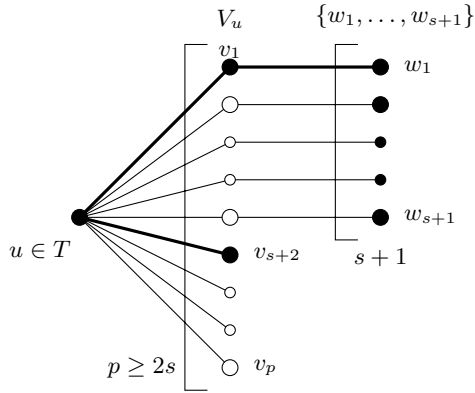
\begin{figure}
\begin{center}
\begin{tikzpicture}[scale=1] \draw[very thick] (2,2)--(0,2)--(-2,0)--(0,-0.5); \draw(-2,0)--(0,1.5)--(2,1.5) (-2,0)--(0,1)--(2,1) (-2,0)--(0,0.5)--(2,0.5) (-2,0)--(0,0)--(2,0) (-0.3,2.3)--(-0.6,2.3)--(-0.6,-2.3)--(-0.3,-2.3) (1.7,2.3)--(1.4,2.3)--(1.4,-0.3)-- (1.7,-0.3) (-2,0)--(0,-1) (-2,0)--(0,-1.5) (-2,0)--(0,-2); \draw[fill=black] (-2,0) circle [radius=3pt] (0,2) circle [radius=3pt] (0,-0.5) circle [radius=3pt] (2,2) circle [radius=3pt] (2,1.5) circle [radius=3pt] (2,1) circle [radius=2pt] (2,0.5) circle [radius=2pt] (2,0) circle [radius=3pt]; \draw[fill=white] (0,1.5) circle [radius=3pt] (0,1) circle [radius=2pt] (0,0.5) circle [radius=2pt] (0,0) circle [radius=3pt] (0,-1) circle [radius=2pt] (0,-1.5) circle [radius=2pt] (0,-2) circle [radius=3pt]; \node[below left] at (-2,-0.2) {$u\in T$}; \node[left] at (-0.6,-2) {$p\geq 2s$}; \node[below] at (2,-0.3) {$s+1$}; \node[above] at (0,2.4) {$V_u$}; \node[above] at (2.2,2.4) {$\{w_1,\ldots,w_{s+1}\}$}; \node[right] at (2.2,2) {$w_1$}; \node[right] at (2.2,0) {$w_{s+1}$}; \node[above] at (0,2) {$v_1$};  \node[right] at (0.2,-0.5) {$v_{s+2}$}; \node[right] at (0.2,-2) {$v_p$}; \end{tikzpicture}
\end{center}
\caption{An example of the contradiction obtained in Lemma~\ref{l-two}: the assumption that a vertex $u\in T$ does not belong to the core of a core-complete solution leads to the presence of an induced $sP_1+P_4$ (highlighted by the black vertices and thick edges).}\label{f-two}
\end{figure}

\noindent
By using the above results and the results from Sections~\ref{s-special} and~\ref{s-mim}, we are now able to prove our main result. 

\begin{theorem}\label{t-mm}
{\sc Weighted Subset Feedback Vertex Set} is polynomial-time solvable for $(2P_1+P_4)$-free graphs.
\end{theorem}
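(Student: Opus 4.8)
The plan is to solve the equivalent maximisation problem of finding a maximum-weight $T$-forest. Fix a maximum-weight solution $F$ (unknown to the algorithm). I will compute, in polynomial time, the best candidate of each of a constant number of mutually exhaustive types and return the heaviest; since $F$ falls into at least one type and the algorithm never outputs a non-solution, the heaviest candidate has weight exactly $w(F)$. The first split is whether $F$ is core-incomplete or core-complete, taking $s=2$ so that the core of $F$ consists of the vertices of degree at most $3$ in $F$. If $F$ is core-incomplete, then Lemma~\ref{l-poss} already returns a maximum-weight core-incomplete solution in polynomial time, and this type is settled.

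So I would next assume $F$ is core-complete and branch on $|T\cap V(F)|$. If $T\cap V(F)=\emptyset$, then $F$ has no $T$-cycle for free and the heaviest such solution is $G-T$, which is exactly the $T$-free case of Lemma~\ref{l-1part}. If $T\cap V(F)\neq\emptyset$, then Lemma~\ref{l-two} with $s=2$ guarantees that every vertex of $T\cap V(F)$ lies in the core (hence has degree at most $3$ in $F$), that the core is a clique, and that $|T\cap V(F)|\leq 2$. When $|T\cap V(F)|=1$, the unique $T$-vertex $u$ has $\deg_F(u)\in\{0,1,2,3\}$, and the corresponding $F$ is a $\leq$$1$-part solution (degree $0$ or $1$), a $2$-part solution (degree $2$), or a $3$-part solution (degree $3$); these are handled by Lemmas~\ref{l-1part},~\ref{l-2part} and~\ref{l-3part}, respectively. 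Taking the heaviest of these candidates disposes of every case with $|T\cap V(F)|\leq 1$.

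The crux is the remaining case $|T\cap V(F)|=2$. Here Lemma~\ref{l-two} forces the two $T$-vertices $u_1,u_2$ to be adjacent (the core is a clique containing both) and each to have degree at most $3$ in $F$; since the edge $u_1u_2$ consumes one unit of each degree, each $u_i$ has at most two neighbours outside $\{u_1,u_2\}$. I would guess $u_1,u_2$ together with these at most four external neighbours (only $O(n^6)$ choices), delete every other vertex of $T$ and every non-guessed neighbour of $u_1$ or $u_2$, and call the result $G'$. The key observation I would establish is a clean feasibility criterion: a subgraph of $G'$ containing $u_1,u_2$ and their guessed external neighbours is a $T$-forest if and only if all those external neighbours lie in pairwise distinct connected components of $F-\{u_1,u_2\}$. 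Equivalently, $u_1u_2$ is a bridge and the branches hanging off $u_1$ and off $u_2$ are mutually separated. Thus finding the optimal $F$ for a fixed guess amounts to a minimum-weight multiway vertex separation of the (at most four) external neighbours in $G'-\{u_1,u_2\}$, keeping everything else.

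Finally I would resolve this separation by dichotomising on the number $t\leq 4$ of external neighbours. If $t\leq 2$ there are at most two terminals, so the separation is a single \textsc{Weighted Vertex Cut} instance solved by Lemma~\ref{l-easy} (for $t\leq1$ nothing need be separated). If $t\in\{3,4\}$ the naive multiway cut is intractable in general, so I would instead guess one representative vertex per full branch and prove, by an argument entirely analogous to Lemma~\ref{l-full}, that $(2P_1+P_4)$-freeness forces each full branch to be a clique and forces distinct branches to be distinct components; crucially, with at least three branches present the two isolated vertices required for each forbidden $2P_1+P_4$ are always supplied by representatives of the other branches (one on either side of the bridge). After deleting the vertices that would illegitimately merge two branches, exactly as in the construction of $F^*$ in Lemma~\ref{l-3part}, the remaining graph is the desired solution for that guess. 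The main obstacle is precisely this $t\in\{3,4\}$ subcase: one must verify the bridge decomposition, confirm that full branches really are cliques despite the presence of two centres, and argue that cross-branch and cross-side edges can be removed without losing optimality, all of which hinge on the $(2P_1+P_4)$-freeness of $G$. Returning the heaviest candidate found over all types and all guesses then yields a maximum-weight $T$-forest in polynomial time.
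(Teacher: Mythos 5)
Your overall architecture coincides with the paper's: core-incomplete solutions via Lemma~\ref{l-poss} with $s=2$; core-complete solutions with $|T\cap V(F)|\leq 1$ via Lemmas~\ref{l-1part},~\ref{l-2part} and~\ref{l-3part} (using Lemma~\ref{l-two} to cap the degree of the unique $T$-vertex at~$3$); and, for $|T\cap V(F)|=2$, guessing the two adjacent $T$-vertices $u_1,u_2$ and reducing to {\sc Weighted Vertex Cut}. The divergence is in how this last case is finished. You budget up to two external neighbours per $u_i$ (so up to four terminals) and, for $t\in\{3,4\}$ terminals, plan a multiway separation handled by a two-centre analogue of Lemmas~\ref{l-full} and~\ref{l-3part}, which you yourself single out as the main obstacle. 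The paper avoids this entirely through one structural observation you missed: when $|T\cap V(F)|=2$, Lemma~\ref{l-two} gives not merely that $u_1,u_2$ lie in the core but that the core \emph{equals} $\{u_1,u_2\}$ (the core is a clique of size at most two containing both), so every other vertex of $F$ has degree at least~$4$ in $F$. Consequently, if both $u_1,u_2$ had degree at least~$2$ in $F$ and one of them, say $u_1$, had degree~$3$, then its two neighbours $v_1,v_1'$ other than $u_2$, and a neighbour $v_2$ of $u_2$, would each have a private neighbour $x_1,x_1',x_2$ in their (pairwise distinct) components of $F-\{u_1,u_2\}$, and then $x_2,v_2,u_2,u_1$ together with $x_1,x_1'$ induce a $2P_1+P_4$ (Fig.~\ref{f-mm}). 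Hence $t\geq 3$ never occurs: either one of $u_1,u_2$ has degree~$1$ (and after deleting it and its other neighbours one is back to a $\leq$$1$-part or $2$-part instance), or both have degree exactly~$2$, and a single {\sc Weighted Vertex Cut} suffices.

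So your $t\in\{3,4\}$ branch is vacuous. This does not by itself make your algorithm wrong: since the optimum is always captured by your $t\leq 2$ branches, it is enough that the $t\in\{3,4\}$ branch never emits a non-solution, which you can guarantee cheaply by validating every candidate with Lemma~\ref{l-test}, or, with more effort, by completing the two-centre clique argument you sketch (it does go through, because the second centre and the representatives of the remaining branches supply the two isolated vertices of each forbidden $2P_1+P_4$). But as written, your proof leaves its self-declared hardest step as a sketch, whereas the degree-at-least-$4$ observation disposes of that step in a few lines; you should replace it accordingly.
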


\begin{proof}
Let $G=(V,E)$ be a $(2P_1+P_4)$-free graph, and let $T$ be some subset of~$V$. Let $w$ be a positive vertex weighting of $G$. We aim to find a maximum weight $T$-forest~$F$ for $(G,T,w)$ (recall that we call $T$-forests solutions for our problem).
As $s=2$, the core of $F$ is, by definition, the set of vertices of $F$ that have maximum degree at most~$3$ in $F$.
 
We first compute a core-incomplete solution of maximum weight; this takes polynomial time by Lemma~\ref{l-poss} (in which we set $s=2$). We will now compute in polynomial time a core-complete solution $F$ of maximum weight for $(G,T,w)$. We then compare the weights of the two solutions found to each other and pick one with the largest weight.

By Lemma~\ref{l-two}, it holds for every core-complete solution $F$ that $T\cap V(F)$ belongs to the core of $F$, and moreover that $|T\cap V(F)|\leq 2$. 
We first compute a core-complete solution $F$ with $|T\cap V(F)|\leq 1$ of maximum weight.
As  $T\cap V(F)$ belongs to the core of $F$, we find that if $|T\cap V(F)|=1$, say $T\cap V(F)=\{u\}$ for some $u\in T$, then $u$ has maximum degree at most~$3$ in $F$. Hence, in the case where  $|T\cap V(F)|\leq 1$, it suffices to compute a $\leq$$1$-part solution, $2$-part solution and $3$-part solution for $(G,T,w)$ of maximum weight and to remember one with the largest weight. By Lemmas~\ref{l-1part},~\ref{l-2part} and~\ref{l-3part}, respectively, this takes polynomial time.

It remains to compute a core-complete solution $F$ with $|T\cap V(F)|=2$ of maximum weight. 
By Lemma~\ref{l-two}, it holds for every such solution $F$ that both vertices of $T\cap V(F)$ are adjacent and are the only vertices that belong to the core of $F$.

We consider all $O(n^2)$ possibilities of choosing two adjacent vertices of $T$ to be the two core vertices of $F$. Consider such a choice of adjacent vertices $u_1,u_2$. So, $u_1$ and $u_2$ are the only vertices of degree at most~$3$ in the solution $F$ that we are looking for and moreover, all other vertices of $T$ do not belong to $F$.

Suppose one of the vertices $u_1,u_2$ has degree~$1$ in $F$. First let this vertex be $u_1$. Then we remove $u_1$ and all its neighbours except for $u_2$ from $G$. Let $G'$ be the resulting graph. Let $T'= T \setminus (\{u_1\}\cup (N(u_1) \setminus \{u_2\}))$, and let $w'$ be the restriction of $w$ to $G'$. We now compute for $(G',w',T')$, a $\leq$$1$-part solution and  $2$-part solution of maximum weight with $u_2$ as center.
By Lemmas~\ref{l-1part} and~\ref{l-2part}, respectively, this takes polynomial time.\footnote{Strictly speaking, this statement follows from the proofs of these two lemmas, as we have fixed $u_2$ as the center.} We then add $u_1$ back to the solution to get a solution for $(G,w,T)$. We do the same steps with respect to $u_2$. In the end we take a solution with largest weight.

So from now on, assume that both $u_1$ and $u_2$ have degree at least~$2$ in $F$. We first argue that in this case both $u_1$ and $u_2$ have degree exactly~$2$ in $F$. For a contradiction, suppose $u_1$ has degree~$3$ in $F$ (recall that $u_1$ has degree at most~$3$ in $F$). 
Let $v_1$ and $v_1'$ be two distinct neighbours of $u_1$ in $V(F)\setminus \{u_2\}$. Let $v_2$ be a neighbour of $u_2$ in $V(F)\setminus \{u_1\}$.
As $F$ is a $T$-forest, $v_1,v_1',v_2$ belong to distinct connected components $D_1$, $D_1'$ and $D_2$, respectively, of $F-\{u_1,u_2\}$. As the core of $F$ consists of $u_1$ and $u_2$ only, $v_1,v_1',v_2$ each have a neighbour $x_1,x_1',x_2$ in $D_1$, $D_1'$ and $D_2$, respectively.
However, now $x_2,v_2,u_2,u_1,x_1,x_1'$ induce a $2P_1+P_4$ in $F$ and thus also in $G$, a contradiction; see also Fig.~\ref{f-mm}.

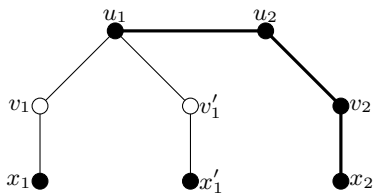
\begin{figure}
\begin{center}
\begin{tikzpicture}[scale=1] \draw (-2,-1)--(-2,0)--(-1,1)--(0,0)--(0,-1); \draw[very thick] (-1,1)--(1,1)--(2,0)--(2,-1); \draw[fill=white] (0,0) circle [radius=3pt] (-2,0) circle [radius=3pt]; \draw[fill=black] (-2,-1) circle [radius=3pt] (2,0) circle [radius=3pt] (0,-1) circle [radius=3pt] (2,-1) circle [radius=3pt] (-1,1) circle [radius=3pt] (1,1) circle [radius=3pt]; \node[above] at (-1,1) {$u_1$}; \node[above] at (1,1) {$u_2$}; \node[left] at (-2,0) {$v_1$}; \node[right] at (0,0) {$v'_1$}; \node[right] at (2,0) {$v_2$}; \node[left] at (-2,-1) {$x_1$}; \node[right] at (0,-1) {$x'_1$}; \node[right] at (2,-1) {$x_2$};  \end{tikzpicture}
\end{center}
\caption{The situation in Theorem~\ref{t-mm} where $u_1$ has degree at least~$3$ in $F$ and $u_2$ has degree~$2$ in $F$; this leads to the presence of an induced $2P_1+P_4$ (highlighted by the black vertices and thick edges).}\label{f-mm}
\end{figure}

From the above we conclude that each of $u_1$ and $u_2$ has exactly one other neighbour in $F$. Call these vertices $v_1$ and $v_2$, respectively. We consider all $O(n^2)$ possibilities of choosing $v_1$ and $v_2$. As $F$ is a $T$-forest, $G-\{u_1,u_2\}$ consists of two connected components $D_1$ and $D_2$, such that $v_1$ belongs to $D_1$ and $v_2$ belongs to $D_2$.

Let $G'$ be the graph obtained from $G$ by removing every vertex of $T$, every neighbour of $u_1$ except $v_1$ and every neighbour of $u_2$ except $v_2$. Let $w'$ be the restriction of $w$ to $G'$.
Then, it remains to solve {\sc Weighted Vertex Cut} for the instance $(G',v_1,v_2,w')$. By Lemma~\ref{l-easy}, this can be done in polynomial time. Out of all the solutions found for different pairs $u_1,u_2$ we take one with the largest weight. Note that we found this solution in polynomial time, as the number of branches is $O(n^4)$. 

As mentioned we take a solution of maximum weight from all the solutions found in the above steps. The correctness of our algorithm follows from the fact that we exhaustively considered all possible situations. Moreover, the number of situations is polynomial and processing each situation takes polynomial time. Hence, the running time of our algorithm is polynomial.
\qed
\end{proof}

\section{The Unweighted Variant}\label{s-2p1p4u}

In this section, we present our polynomial-time algorithm for {\sc Subset Feedback Vertex Set} on $(sP_1+P_4)$-free graphs for every $s\geq 0$. As this problem is a special case of {\sc Weighted Subset Feedback Vertex Set} (namely when $w\equiv 1$), we can use some of the structural results from the previous section.

\begin{theorem}\label{t-unw}
{\sc Subset Feedback Vertex Set} is polynomial-time solvable on $(sP_1+P_4)$-free graphs for every $s\geq 0$.
\end{theorem}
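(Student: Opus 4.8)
The plan is to follow the same high-level split between \emph{core-incomplete} and \emph{core-complete} solutions used in the weighted case, but to exploit that, once the vertices of $T$ inside a solution have been located, the unweighted problem admits a much more direct treatment. First I would dispose of the base cases. For $s=0$ the class of $(sP_1+P_4)$-free graphs is exactly the class of $P_4$-free graphs, so Lemma~\ref{l-p4} together with Theorem~\ref{t-bpt19} settles the problem immediately. For $s=1$ we have $P_1+P_4\ssi 2P_1+P_4$, hence every $(P_1+P_4)$-free graph is $(2P_1+P_4)$-free and Theorem~\ref{t-mm} applies, the unweighted problem being the special case $w\equiv 1$. Thus I may assume $s\geq 2$ and use the machinery of Sections~\ref{s-mim} and~\ref{s-algo}, all of which was deliberately stated for arbitrary $s\geq 2$.

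For $s\geq 2$ I would compute a maximum solution in two parts and return the larger one. A maximum-size \emph{core-incomplete} solution is found directly by Lemma~\ref{l-poss} with $w\equiv 1$. It remains to compute a maximum-size \emph{core-complete} solution $F$, where Lemma~\ref{l-two} is the key structural input: $T\cap V(F)$ is contained in the core of $F$, has size at most $2s-2$, and every such vertex has at most $2s-1$ neighbours in $F$. I would therefore branch over all $O(n^{2s-2})$ choices of the set $Q:=T\cap V(F)$ and, for each, over all $O(n^{(2s-2)(2s-1)})$ choices of the set $N_F(Q)$ of neighbours of $Q$ inside $F$; both bounds have constant exponents for fixed $s$. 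After fixing $Q$ and $N_F(Q)$ I delete every vertex of $T\setminus Q$ (these are not in $F$) and every neighbour of a vertex of $Q$ outside $N_F(Q)$. In the resulting graph $G'$ the only vertices of $T$ are those of $Q$, each with its neighbourhood already fixed, so the task reduces to finding a maximum-size $Q$-forest of $G'$ in which each vertex of $Q$ keeps exactly its prescribed neighbours.

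The core of the argument is to solve this residual task in polynomial time. By the separation principle of Lemmas~\ref{l-sep} and~\ref{l-sep2} (whose proofs use only that $F$ is a $T$-forest), $F$ is a $Q$-forest precisely when, for every $u\in Q$, the chosen neighbours of $u$ lie in pairwise distinct connected components of $F-u$; equivalently, contracting each component of $F-Q$ must yield a forest on $Q$ together with the contracted component nodes. To optimise under this constraint I would generalise Lemma~\ref{l-full}: since $G'$ is $(sP_1+P_4)$-free and the kernel $Q\cup N_F(Q)$ has constant size, whenever a component of $F-Q$ attached to $Q$ is \emph{full}, the presence of $s-1$ further such attachments, together with the $P_4$ supplied by $u$ and two of its neighbours, forces that component to be complete. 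A complete component forms a single block and therefore cannot be split, so the required separation is automatic; the only genuine freedom is which of the boundedly many interface vertices are grouped into the same component, and this can be enumerated. For the remaining \emph{non-full} attachments, where some neighbour of $Q$ has degree one in $F$, I would guess and delete those degree-one attachments exactly as in the proof of Lemma~\ref{l-3part}, after which the surviving groups are separated by the \textsc{Weighted Vertex Cut} routine of Lemma~\ref{l-easy} with unit weights. As the number of branches and of cut computations is polynomial for fixed $s$, the whole procedure runs in polynomial time.

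I expect the main obstacle to be exactly this last step: guaranteeing, for up to $2s-2$ simultaneous centres of degree up to $2s-1$, that the required separation of their neighbourhoods into distinct components never degenerates into an intractable multiway-cut instance. The structural control must come entirely from $(sP_1+P_4)$-freeness, and the delicate point, mirroring the footnote after Lemma~\ref{l-3part}, is that weakening the forbidden graph by a single independent vertex already breaks the completeness argument and reintroduces the Papadopoulos--Tzimas gadget. Verifying that the full/non-full distinction exhausts every configuration of the constant-size kernel, and that in each configuration the separation collapses into independent vertex-cut computations, is where the real work lies. The unweighted setting is what makes this \emph{straightforward} rather than laborious: one only ever needs to \emph{maximise the number} of retained vertices, so within each block the greedy choice of keeping everything compatible with the fixed interface is optimal, which avoids precisely the weighting subtleties that complicated the endgame of Theorem~\ref{t-mm}.
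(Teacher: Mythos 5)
Your first half (the reduction to $s\geq 2$, the core-incomplete case via Lemma~\ref{l-poss}, and the use of Lemma~\ref{l-two} to bound $|T\cap V(F)|$ by $2s-2$) matches the paper exactly. But the core-complete case is where your proposal has a genuine gap, and you flag it yourself: after guessing $Q=T\cap V(F)$ and $N_F(Q)$, you are left with a residual problem of simultaneously separating, for up to $2s-2$ centres, their prescribed neighbours into distinct components of $F-Q$, and you only sketch a hope that a generalisation of Lemma~\ref{l-full} plus unit-weight vertex cuts will make this tractable. That hope is not justified. With three or more terminals to separate, minimising the number of deleted vertices is (unweighted) vertex multiway cut, which is \NP-hard in general, so everything hinges on structural claims about $(sP_1+P_4)$-free graphs with multiple interacting centres that you neither state precisely nor prove. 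Your closing claim that unweightedness makes this ``straightforward'' via a greedy choice inside blocks is not correct: greediness does not resolve a separation problem, and this is precisely the step you admit is ``where the real work lies.'' A proof cannot delegate its hardest step to an unverified expectation.

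What you are missing is the paper's one-line counting trick, which makes the entire core-complete case trivial in the unweighted setting and removes any need for cuts or for generalising Lemma~\ref{l-full}. Since $V\setminus T$ is itself a $T$-forest, a maximum-size solution $F$ satisfies $|V(F)|\geq |V\setminus T|$, hence the number of vertices of $V\setminus T$ that $F$ \emph{excludes} is at most $|T\cap V(F)|\leq 2s-2$. So after guessing $T\cap V(F)$ in $O(n^{2s-2})$ ways, one simply guesses the at most $2s-2$ excluded non-$T$ vertices in another $O(n^{2s-2})$ ways; this determines $F$ completely, and each of the $O(n^{4s-4})$ candidates is checked in linear time (Lemma~\ref{l-test}). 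This is how unweightedness is actually exploited: it bounds the deficiency of an optimal solution against the trivial solution $V\setminus T$, a comparison that has no analogue when weights are present (which is exactly why the weighted algorithm needs the machinery of Sections~\ref{s-special}--\ref{s-algo}). Your proposal, as written, does not constitute a proof; if you insert this counting argument in place of your residual separation problem, it becomes the paper's proof.
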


\begin{proof}
Let $G=(V,E)$ be an $(sP_1+P_4)$-free graph for some integer $s$, and let $T\subseteq V$. Let $|V|=n$. As the class of $(sP_1+P_4)$-free graphs is a subclass of the class of $((s+1)P_1+P_4)$-free graphs, we may impose any lower bound on $s$; we set $s\geq 2$. 
We aim to find a $T$-forest $F$ of $G$ of maximum size (recall that we call $T$-forests solutions for our problem).

We first compute a maximum-size core-incomplete solution for $(G,T)$. By Lemma~\ref{l-poss}, this takes polynomial time. It remains to compare the size of this solution with a maximum-size core-complete solution, which we compute below.

By Lemma~\ref{l-two}, we find that $T\cap V(F)$ has size at most $2s-2$ for every core-complete solution~$F$; see also Fig.~\ref{f-scheme}. We consider all $O(n^{2s-2})$ possibilities of choosing the vertices of $T\cap V(F)$. For each choice of $T\cap V(F)$ we do as follows. We note that the set of vertices of $G-T$ that do not belong to $F$ has size at most $|T\cap V(F)|$; otherwise $F'=V\setminus T$ would be a larger solution than~$F$. Hence, we can consider all $O(n^{|T\cap V(F)|})=O(n^{2s-2})$ possibilities of choosing the set of vertices of $G-T$ that do not belong to $F$, or equivalently, of choosing the set of vertices of $G-T$ that {\it do} belong to $F$. In other words, we guessed $F$ by brute force, and the number of guesses is $O(n^{4s-4})$. In the end we found in polynomial time a maximum-size core-complete solution. We compare it with the maximum-size core-incomplete solution found above and pick one with the largest size. 
\qed
\end{proof}

\begin{figure}
\begin{center}
\begin{tikzpicture}[scale=0.4] \draw (-10,5)--(13,5)--(13,-5)--(-10,-5)--(-10,5) (2,5)--(2,-5) (-10,-1)--(13,-1); \node[above] at (-4.5,5) {$V(F)=V\setminus S$}; \node[above] at (5.5,5) {$S$}; \node[left] at (-10,2) {$T$}; \node[left] at (-10,-3) {$V\setminus T$}; \node at (-4.5,3) {$T\cap V(F)$}; \node at (-4.5,1) {size: $\leq 2s-2$}; \node at (5.5,2) {$T\cap S$}; \node at (-4.5,-3) {$V(F)\setminus T$}; \node at (5.5,-2) {$S\setminus T$}; \node at (5.5,-4) {$\;\;\;\;\;\;\;\;\;\;\;\;\;\;$ size: $\leq |T\cap V(F)|\leq 2s-2$}; \end{tikzpicture}
\end{center}
\caption{The situation in Theorem~\ref{t-unw} where we search for a largest core-complete solution for $(G,T)$.}\label{f-scheme}
\end{figure}

\section{Conclusions}\label{s-con}

By combining known hardness results with new polynomial-time results, we completely classified the complexities of {\sc Weighted Subset Feedback Vertex Set} and {\sc Subset Feedback Vertex Set} for $H$-free graphs. We recall that the classical versions {\sc Weighted Feedback Vertex Set} and {\sc Feedback Vertex Set} are not yet completely classified due to missing cases when $H$ is a linear forest (see Theorem~\ref{t-kn}). 

 We finish our paper with a brief discussion on the variant where instead of intersecting every $T$-cycle, a solution only needs to intersect every $T$-cycle of {\it odd} length. These two problems are called  {\sc Weighted Subset Odd Cycle Transversal} and {\sc Subset Odd Cycle Transversal}, respectively. 
 
So far, both these problems behave in exactly the same way on $H$-free graphs as their feedback vertex set counterparts (see~\cite{BJPP20} and~\cite{BJP21}). So, the only open cases for {\sc Weighted Subset Odd Cycle Transversal} on $H$-free graphs are the ones where $H\in \{2P_1+P_3,P_1+P_4,2P_1+P_4\}$ and the only open cases for {\sc Subset Odd Cycle Transversal} on $H$-free graphs are the ones where $H=sP_1+P_4$ for some $s\geq 1$. Since solutions $F$ for these problems may only contain vertices of $T$ of high degree, we can no longer use the proof technique from this paper, and new ideas are needed. 
 
We note, however, that complexity dichotomies of {\sc Weighted Subset Odd Cycle Transversal} and {\sc Subset Odd Cycle Transversal} do not have to coincide with those in Theorems~\ref{t-dicho} and~\ref{t-dicho2} for their feedback vertex set counterparts. After all, the complexities of the corresponding classical versions may not coincide either. Namely, it is known that {\sc Odd Cycle Transversal} is \NP-complete for $(P_2+P_5,P_6)$-free graphs~\cite{DFJPPP19}, and thus for $(P_2+P_5)$-free graphs and $P_6$-free graphs, whereas for {\sc Feedback Vertex Set} such a hardness result is unlikely: for every linear forest $H$, {\sc Feedback Vertex Set} is quasipolynomial-time solvable on $H$-free graphs~\cite{GLPPR21}.


\begin{thebibliography}{10}

\bibitem{ACPRS20}
Abrishami, T., Chudnovsky, M., Pilipczuk, M., Rz{\k{a}}\.{z}ewski, P., Seymour,
  P.: Induced subgraphs of bounded treewidth and the container method. Proc.
  SODA 2021, 1948--1964 (2021)

\bibitem{BV13}
Belmonte, R., Vatshelle, M.: Graph classes with structured neighborhoods and algorithmic applications. Theoretical Computer Science \textbf{511}, 54--65 (2013)

\bibitem{BPT19}
Bergougnoux, B., Papadopoulos, C., Telle, J.A.: {Node Multiway Cut} and {Subset
  Feedback Vertex Set} on graphs of bounded mim-width. Proc. WG 2020, LNCS
  \textbf{12301},  388--400 (2020)

\bibitem{BJPP20}
Brettell, N., Johnson, M., Paesani, G., Paulusma, D.: Computing subset
  transversals in ${H}$-free graphs. Theoretical Computer Science \textbf{898}, 59--68 (2022)

\bibitem{BJP21}
Brettell, N., Johnson, M., Paulusma, D.: Computing weighted subset transversals
  in ${H}$-free graphs. Journal of Computer and System Sciences \textbf{128}, 71--85 (2022) 

\bibitem{DFJPPP19}
Dabrowski, K.K., Feghali, C., Johnson, M., Paesani, G., Paulusma, D.,
  Rz{\k{a}}\.{z}ewski, P.: On cycle transversals and their connected variants
  in the absence of a small linear forest. Algorithmica  \textbf{82},
  2841--2866 (2020)

\bibitem{FHKPV14}
Fomin, F.V., Heggernes, P., Kratsch, D., Papadopoulos, C., Villanger, Y.:
  Enumerating minimal subset feedback vertex sets. Algorithmica  \textbf{69},
  216--231 (2014)
  
\bibitem{GLPPR21}
Gartland, P., Lokshtanov, D., Pilipczuk, M., Pilipczuk, M., Rz\k{a}\.zewski, P.: Finding large induced sparse subgraphs in $C_{>t}$-free graphs in quasipolynomial time. Proc. STOC 2021,
330--341 (2021)

\bibitem{LMRS17}
Lokshtanov, D., Misra, P., Ramanujan, M.S., Saurabh, S.:
\newblock Hitting selected (odd) cycles.
\newblock {{SIAM} Journal on Discrete Mathematics}, {\bf 31}, 1581--1615 (2017)

\bibitem{Mu17b}
Munaro, A.: On line graphs of subcubic triangle-free graphs. Discrete Mathematics  \textbf{340},  1210--1226 (2017)

\bibitem{PPR21}
Paesani, G., Paulusma, D., Rz{\k{a}}\.{z}ewski, P.: {Feedback Vertex Set} and
  {Even Cycle Transversal} for {$H$}-free graphs: finding large block graphs. 
  SIAM Journal on Discrete Mathematics, to appear.
  
\bibitem{PPR}
Paesani, G., Paulusma, D., Rz{\k{a}}\.{z}ewski, P.:
Classifying Subset Feedback Vertex Set for $H$-free graphs.
Proc. WG 2022, LNCS, to appear.

\bibitem{PT19}
Papadopoulos, C., Tzimas, S.: Polynomial-time algorithms for the subset feedback vertex set problem on interval graphs and permutation graphs.
Discrete Applied Mathematics  \textbf{258},  204--221 (2019)

\bibitem{PT20}
Papadopoulos, C., Tzimas, S.: Subset feedback vertex set on graphs of bounded independent set size. 
Theoretical Computer Science  \textbf{814},  177--188 (2020)

\bibitem{Po74}
Poljak, S.: A note on stable sets and colorings of graphs. Commentationes Mathematicae Universitatis Carolinae  \textbf{15},  307--309 (1974)
  
\bibitem{Va12}
Vatshelle, M.: New Width Parameters of Graphs. PhD thesis, University of Bergen (2012).

\end{thebibliography}
\end{document}